\documentclass[pdflatex,sn-mathphys-num]{sn-jnl}

\usepackage{graphicx}%
\usepackage{multirow}%
\usepackage{amsmath,amssymb,amsfonts}%
\usepackage{amsthm}%
\usepackage{mathrsfs}%
\usepackage[title]{appendix}%
\usepackage{xcolor}%
\usepackage{textcomp}%
\usepackage{manyfoot}%
\usepackage{booktabs}%
\usepackage{algorithm}%
\usepackage{algorithmicx}%
\usepackage{algpseudocode}%
\usepackage{listings}%
\usepackage{longtable}

\usepackage{diagbox}
\usepackage{float}
\usepackage{tikz}
\usepackage{caption}
\usepackage{hyperref}

\theoremstyle{thmstyleone}%
\newtheorem{theorem}{Theorem}
\newtheorem{proposition}[theorem]{Proposition}%

\theoremstyle{thmstyletwo}%

\theoremstyle{thmstylethree}%
\newtheorem{definition}{Definition}

\newtheorem{notation}{Notation}
\newtheorem{corollary}{Corollary}

\begin{document}

\title[Article Title]{Hardy's Paradox for Yu-Oh Set Constructed by Logically Contextual Quantum States}


\author[1]{\fnm{Chang} \sur{He}}\email{hechang@buaa.edu.cn}

\author*[1]{\fnm{Yongjun} \sur{Wang}}\email{wangyj@buaa.edu.cn}

\author[1]{\fnm{Baoshan} \sur{Wang}}\email{bwang@buaa.edu.cn}

\author[1]{\fnm{Songyi} \sur{Liu}}\email{liusongyi@buaa.edu.cn}

\author[1]{\fnm{Yunyi} \sur{Jia}}\email{by2309005@buaa.edu.cn}

\affil[1]{\orgdiv{School of Mathematical Scieneces}, \orgname{Beihang University}, \orgaddress{ \city{Beijing}, \postcode{100191}, \country{China}}}


\abstract{
	Quantum contextuality is a fundamental nonclassical property of quantum systems, regarded as a key resource that demonstrates the computational and informational advantages of quantum over classical systems. Our present work aims to construct Hardy's paradoxes, a set of possibilistic conditions witnessing contextuality, for Yu-Oh set, which is the state-independent contextual quantum system with the least number of vectors. To achieve the aim, we systematically enumerate all logically contextual pure states on Yu-Oh set, and theoretically prove that no mixed states in this scenario are logically contextual. Based on the identified logically contextual quantum states, we construct 12 Hardy's paradoxes with identical success probability SP=11.1\%. Furthermore, we present corresponding observables to experimentally witness these Hardy's paradoxes.}

\keywords{Hardy's paradox, Success probability, Yu-Oh set, Logical contextuality}

\maketitle

\section*{Declarations}

\noindent\textbf{Competing interests.}
The authors have no relevant financial or non-financial interests to disclose.

\vspace{10pt}

\noindent\textbf{Acknowledgments.}
The work was supported by National Natural Science Foundation of China (Grant No. 12371016, 11871083) and National Key R\&D Program of China (Grant No. 2020YFE0204200).

\section{Introduction}\label{sec1}
 
Quantum contextuality is referred to as a central result in the foundations of quantum mechanics \cite{ksgeneral2022}, because it conflicts with the classical $deterministic\ local\ hidden-variable\ model$, so as to provide an answer to the Einstein-Podolsky-Rosen (EPR) argument \cite{epr1935}. The earliest study of quantum contextuality has started since the introduction of Bell's non-locality theory \cite{bell1964} in 1964. The Bell non-locality is technically contextuality with entanglement on a space-like separated scenario and it has been experimentally verified in numerous setups, including the Clauser-Horne-Shimony-Holt (CHSH) experiment \cite{chsh1969} and the Klyachko-Can-Binicioglu-Shumovsky (KCBS) experiment \cite{kcbs2008}. In 1967, the Kochen-Specher theorem \cite{ks1967} pioneered a field called quantum contextuality and its refinement and relaxations have been proposed in many theoretic structures such as sheaf-theoretic approach \cite{ab2011}, graph-theoretic approaches \cite{cabello2014,silva2017,liu2025} and quantum logic \cite{ab2021,Zhou2024,liu2025}. Since the introduction of non-locality and contextuality by Bell, Kochen and Specker in the 1960s, this field has gained significant theoretical and applied developments. Over the years, contextuality has been recognized as an vital resource for numerous applications in quantum information \cite{Jon2005,Jon2007,Buhrman2010,kar2016}, quantum communication \cite{gupta2023, bennett1992, barrett2005}, and quantum cryptography \cite{ekert1991, gisin2002}.

Hardy's argument \cite{Hardy1992,Hardy1993} reveals contextuality in bipartite Bell-scenario for two-qubit partially entangled states simply using a set of possibilistic conditions, one of which characterizes an event with probability greater than 0 and others 0. The possibilistic conditions later leads to a concept of Hardy's paradox. As the only $non-zero$ condition starts the chain of logical reasoning for Hardy's argument and it depicts the probability to witness a certain Hardy's paradox in an experiment, the probability of this important condition is referred to as the success probability $SP$.

The conciseness of Hardy's paradox has drawn much attention to make it cater for more generalized quantum systems. In the present, Hardy's paradox has been found to cater for $(2, k, 2)$ \cite{2k21992} and $(n, 2, 2)$ \cite{n221992} Bell-scenario, and for any partially entangled 2-qubit states \cite{Hardy1993}. In 1996, it is proved by Kar \cite{Kar1997} that for a system of two spin-$\frac{1}{2}$ particles, no mixed state admits Hardy's paradox and there are mixed states for a system of three spin-$\frac{1}{2}$ particles which show Hardy's paradox. From Kar's idea, Cabello \cite{Cabello2002} and later Liang and Li \cite{Liang2003,Liang2005} proposed a relaxation of Hardy's paradox. The former gives a Hardy-type logical proof of contextuality for one class of 3-qubit mixed states, and the latter 2-qubit. Their constructions relax one of the $zero$ conditions of the paradox to a $non-zero$. This relaxation allows them to demonstrate the generalized paradox with degree of success $DS$, which was first proposed in \cite{Rai2021} as a generalized notion of success probability $SP$. Over the years, investigation of higher $SP$ and $DS$ has been introduced in succession \cite{ds2013,ds2017,ds2018}. While researches on Hardy's paradox based on generalized Bell scenario have achieved considerable success, we manage to extend the construction of Hardy's paradox for Yu-Oh set and give its corresponding $SP$.

Yu-Oh set \cite{yuoh2012} is a state-independent contextual (SIC) quantum system with 13 vectors, shown in Figure~\ref{fig1}. Each vector corresponds to a rank-1 projector. It is proved to be the simplest SIC scenario with the least number of rank-1 projectors by Cabello \cite{cabello2016}. Yet due to its projector-based definition, Yu-Oh set has not been taken into account in many measurement-based theoretic frameworks. It was pointed out in \cite{ksgeneral2022} that the classifications given in sheaf-theoretic framework can be considered in some sense incomplete as they take into account only Bell-models, and thus fail to recognize the importance of contextuality arguments such as the one by Sixia Yu and C.~H.~Oh \cite{yuoh2012}. Constructing Hardy's paradox for Yu-Oh set is not only an extension of Hardy's argument but a step to depict Yu-Oh set in the existing theoretic frameworks for contextuality researches.

In 1989, the first logical contradiction revealing nonlocality, on a four-qubit maximally entangled state \cite{GHZ1989}, three-qubit later in 1990 \cite{GHZ1990}, was first established by Greenberger, Horne, and Zeilinger. Subsequently, a different type of logical contradiction was provided in Hardy's argument \cite{Hardy1992,Hardy1993}. These logical proofs of contextuality lead to a hierarchy of contextuality introduced by Abramsky and Brandenburger in 2011 \cite{ab2011}. Later, Silva established a refined version of this hierarchy with graph-theoretic approach \cite{silva2017}. In the hierarchy of contextuality, Hardy's argument leads to a concept called $logical\ contextuality$, also referred to as $possibilistic\ contextuality$. Logical contextuality characterizes systems for which no global distribution in the Boolean algebra $\mathbb{B}=\{0,1\}$ can be found for the distribution of marginal events in any contexts. Relationship between logical contextuality and Hardy's paradox has been found in many researches. Hardy's paradox has been proved to be a necessary and sufficient condition for logical contextuality in any $(2,2,l)$ or $(2,k,2)$ Bell scenario \cite{hp2012}, as well as in general $n$-cycle scenarios \cite{hp2021}. Although it is pointed out by Mansfield and Fritz \cite{mansfield2012} that coarse-grained Hardy paradox can not be constructed in some certain logically contextual scenarios, it is still a promising approach to construct Hardy's paradox with the help of logically contextual quantum states, especially during the construction for Yu-Oh set.

In order to reveal the logical properties of Yu-Oh set, we adopt an idea rooted in quantum logic pioneered by Birkhoff and Von Neumann \cite{birkhoff1936} in the 1930s. More specifically, we view the 13 rank-1 projectors in Yu-Oh set as quantum events, and thus Kochen-Specker-assignments of them induces global events for the scenario of Yu-Oh set. Then we can use Proposition~\ref{prop2} to enumerate all logically contextual quantum states. Based on these states, we finally manage to constructed 12 Hardy's paradoxes and give each of their $SP$s as 11.1\%.

In Section~\ref{sec2}, we provide an approach to fit Yu-Oh set in the hierarchy established in \cite{ab2011} and \cite{silva2017}. In Section~\ref{sec3}, we conduct a thorough identification of logically contextual quantum states in Yu–Oh quantum system. We enumerated all logically contextual pure states with Proposition~\ref{prop3} and Algorithm~\ref{alg1}. And theoretically proved the non-existence of logically contextual mixed states with Propositions \ref{prop5} to \ref{prop8}. In Section~\ref{sec4}, we demonstrate that each logically contextual quantum state gives rise to a Hardy's paradox for Yu–Oh set, supporting the claim that the occurrence of a Hardy's paradox is a necessary condition for logical contextuality. We also give the $SP$ of each Hardy's paradoxes as evaluation. Furthermore, in Section~\ref{sec5}, we propose corresponding observables to experimentally verify these Hardy's paradoxes.

\section{Quantum scenario and logical contextuality}\label{sec2}

\subsection{Quantum events and scenario}

Under the assumption of ideal measurements(i.e., the same measurement performed on the same observable in succession will only get the same result), observable $A$ corresponds to a bounded self-adjoint operator $\hat{A}$ on the \textit{Hilbert space} $\mathcal{H}$. $\hat{A}$ has unique spectral decomposition $\hat{A} =\sum _{i} a_i\hat{P_i}$, where $a_i$s are real and projectors $\hat{P_i}$s are mutually orthogonal. Then the probability of event $A=a_i$ is induced by quantum state $\rho$, denoted as $\rho(A=a_i)=tr(\rho\hat{P_i})$, where $\rho$ is a density operator in Hilbert space $\mathcal{H}$. So the event of measuring $A$ to get $a_{i}$ is equivalent to measuring projector $\hat{P_i}$ to get 1.

In this paper, we mainly focus on \textit{rank-1} projectors, each corresponds to a vector on the Hilbert space $\mathcal{H}$. For example, projector $\hat{P}=\frac{|v\rangle\langle v|}{\|v\|^2}$ corresponds to vector $|v\rangle$. We will hereafter refer to rank-1 projectors as vectors directly. In this sense, if a set of $d$ dimensional vectors $\mathcal{E}=\{v_i\}_{i=1}^n$\footnote{$v_i$ corresponds to vector $|v_i\rangle$.} is provided, then any orthogonal basis $\{v_j\}_{j=1}^d\subset\mathcal{E}$ corresponds to an observable $\hat{A}=\sum\limits_{j=1}^da_j\frac{|v_j\rangle\langle v_j|}{\|v_j\|^2}$, where $a_j$s are real numbers. And any orthogonal pair $v_m, v_n\in\mathcal{E}$, $\langle v_m|v_n\rangle=0$, corresponds to the different outcome of a single observable, which is a pair of exclusive events. So any observable event in the quantum system composed by $\mathcal{E}$ can be seen as truth assignment on an orthogonal basis $\{v_j\}_{i=1}^d\subset\mathcal{E}$, where one and only one vector $v_k\in\{v_j\}_{i=1}^d$ be assigned 1 and others 0. Then, without loss of generality, definition of a $quantum\ event$ can be generated as follows.

\begin{definition}
	A $quantum\ event$ in a quantum system is a non-zero vector $v$ on the Hilbert space $\mathcal{H}$. Orthogonal quantum events are exclusive events.
\end{definition}

In a quantum system where there are quantum events $\mathcal{E}$, any maximal pairwise-orthogonal subset $C\subset\mathcal{E}$\footnote{$C\subset \mathcal{E}$ satisfies: (\romannumeral1) $\langle v_i|v_j\rangle=0$ for all $v_i,v_j\in C$; (\romannumeral2)$\langle u|v\rangle\neq0$ for all $u\in\mathcal{E}\setminus C$ and $v\in C$} forms a context composed of a single observable $\hat{A}=\sum\limits_{v_i\in(C\cup C^\perp)}a_i\frac{|v_i\rangle\langle v_i|}{\|v_i\|^2}$, where $C^\perp$ is the orthogonal complete of $C$ and $a_i$s are real. A quantum scenario specifies the allowed observable events in a family of contexts, which means $\mathcal{E}$ itself characterizes a quantum scenario.

\begin{definition}
	A $quantum\ scenario$ is a set of quantum events $\mathcal{E}=\{v_i\}$. A $context$ in $\mathcal{E}$ is a maximal pairwise-orthogonal subset to $\mathcal{E}$, denoted as  $C\subset\mathcal{E}$. The set of all contexts in $\mathcal{E}$ is denoted as $\mathcal{C}(\mathcal{E})$.
\end{definition}

The scenario of Yu-Oh set can be seen as 13 quantum events, hereafter denoted as $\mathcal{E}_{Yu-Oh}$. The exclusivity graph of Yu-Oh set is provided as Fig~\ref{fig1}, in which events are represented by vertices, and pairs of exclusive events, whose corresponding projectors are mutually orthogonal, are represented by adjacent vertices. Any maximal clique of the graph corresponds to a context. For convenience, the vectors are provided omitted normalization, yet they should be normalized during calculation.

\begin{figure}[b]
	\centering
	\begin{tikzpicture}
		\coordinate (A) at (0, 4.5);
		\coordinate (B) at (2.6, 0);
		\coordinate (C) at (-2.6, 0);
		\coordinate (D) at (-0.6, 2.6);
		\coordinate (E) at (1.255, 1.47);
		\coordinate (F) at (-0.655, 0.43);
		\coordinate (G) at (0.6, 2.6);
		\coordinate (H) at (0.655, 0.43);
		\coordinate (I) at (-1.255, 1.47);
		\coordinate (J) at (-0.695, 1.9);
		\coordinate (K) at (0.695, 1.9);
		\coordinate (L) at (0, 0.7);
		\coordinate (M) at (0, 1.5);
		
		\draw (A) -- (B) -- (C) -- (A);
		\draw (A) -- (D) -- (G) -- (A);
		\draw (B) -- (E) -- (H) -- (B);
		\draw (C) -- (I) -- (F) -- (C);
		\draw (D) -- (J) -- (I);
		\draw (G) -- (K) -- (E);
		\draw (H) -- (L) -- (F);
		\draw (J) -- (H);
		\draw (K) -- (I);
		\draw (L) -- (G);
		\draw (M) -- (D);
		\draw (M) -- (E);
		\draw (M) -- (F);
		\draw[dashed] (M) arc (180:100:2);
		
		\node[above] at (A) {$v_1=(1,0,0)$};
		\node[below right] at (B) {$v_2=(0,1,0)$};
		\node[below left] at (C) {$v_3=(0,0,1)$};
		\node[xshift=-1.7cm] at (D) {$v_4=(0,1,-1)$};
		\node[xshift=1.7cm] at (E) {$v_5=(1,0,-1)$};
		\node[xshift=-0.5cm, yshift=-0.7cm] at (F) {$v_6=(1,-1,0)$};
		\node[xshift=1.6cm] at (G) {$v_7=(0,1,1)$};
		\node[xshift=0.5cm, yshift=-0.7cm] at (H) {$v_8=(1,0,1)$};
		\node[xshift=-1.6cm] at (I) {$v_9=(1,1,0)$};
		\node[xshift=-2cm] at (J) {$v_A=(-1,1,1)$};
		\node[xshift=2cm] at (K) {$v_B=(1,-1,1)$};
		\node[yshift=-1.4cm] at (L) {$v_C=(1,1,-1)$}; 
		\node[xshift=2.7cm, yshift=2cm] at (M) {$v_D=(1,1,1)$};
		\foreach \point in {A,B,C,D,E,F,G,H,I,J,K,L,M} {
			\fill (\point) circle (2pt);
		}
	\end{tikzpicture}
	\caption{Exclusivity graph of Yu-Oh set}
	\label{fig1}
\end{figure}
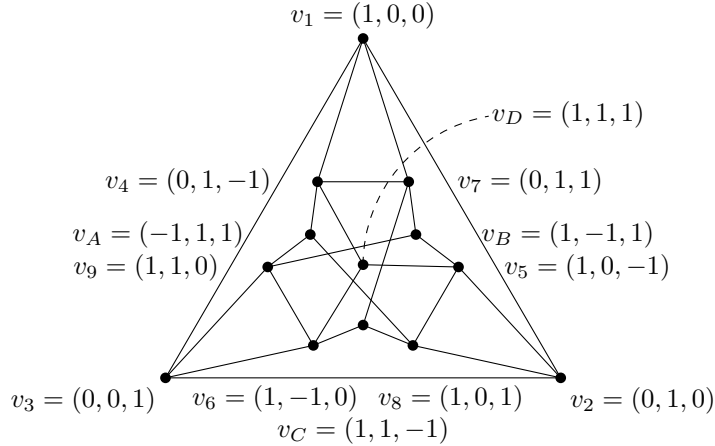

\subsection{Global events}
A structure called sheaf of events is proved in \cite{ab2011}, which allows the existence of a global section, which we can simply refer to as global events. It is also proved that each global event can be seen as a canonical form of \textit{deterministic hidden variable}. Then the nonexistence of probability distribution for these global events witnesses contextuality of the scenario.

In order to construct global events for a quantum scenario composed of vectors, we only need to identify all of its deterministic hidden variables. As generally acknowledged, a deterministic hidden variable assigns a definite outcome to each observable, independent of the context in which it appears. So it yields a 0-1 assignment for each observable context $C$, where one and only one vector $v_k\in C$ be assigned 1 and others 0. Such assignment was established by Kochen and Specker as follows.

\begin{definition}
	A \textit{Kochen-Specker-assignment}, abbreviated as \textit{KS-assignment}, for vectors $\mathcal{E}=\{v_i\}_{i \in I}$ is a function  $\lambda:\mathcal{E}\rightarrow\{0,1\}$ satisfying the following conditions.
	\begin{enumerate}
		\item[($O$)] Orthogonality: $\lambda(v_i)\lambda(v_j)=0$, if $\langle v_i|v_j\rangle=0$, $v_i,v_i\in\mathcal{E}$.
		\item[($C$)] Completeness: $\sum\limits_i\lambda(v_i)=1$, for all orthogonal basis $\{v_i\}\subset\mathcal{E}$.
	\end{enumerate}
\end{definition}

\begin{proposition}
	\label{prop1}
	In the scenario of Yu-Oh set $\mathcal{E}_{Yu-Oh}$, if function $\lambda_{KS}:\mathcal{E}_{Yu-Oh}\rightarrow\{0,1\}$ is a KS-assignment, then $\lambda_{KS}$ induces a deterministic hidden variable $\lambda$ on $\mathcal{E}_{Yu-Oh}$.
\end{proposition}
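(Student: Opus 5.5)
The plan is to construct $\lambda$ explicitly from $\lambda_{KS}$ and then check the two defining requirements of a deterministic hidden variable: it assigns one definite outcome to every observable context, and that outcome does not depend on the context in which an event appears.

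\textbf{Step 1: definition.} For every context $C\in\mathcal{C}(\mathcal{E}_{Yu-Oh})$, let $\lambda|_C$ be the restriction of $\lambda_{KS}$ to $C$. Set $\lambda(C)=v_k$ when $\lambda_{KS}(v_k)=1$ for $v_k\in C$. Independence of context is then built in, because the value attached to a vector $v$ is $\lambda_{KS}(v)$ in every context containing $v$. The substantive task is to show that $\lambda|_C$ is a legitimate deterministic outcome of the observable attached to $C$.

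\textbf{Step 2: contexts that are bases.} From the exclusivity graph in Fig.~\ref{fig1}, read off the maximal cliques. They are of two kinds.
\begin{itemize}
\item The four orthogonal bases $\{v_1,v_2,v_3\}$, $\{v_1,v_4,v_7\}$, $\{v_2,v_5,v_8\}$ and $\{v_3,v_6,v_9\}$.
\item The orthogonal pairs that lie in no triangle, for example $\{v_4,v_A\}$ and $\{v_D,v_4\}$.
\end{itemize}
For a basis context, condition ($C$) gives exactly one vector with value $1$. Condition ($O$) is consistent with this, so $\lambda|_C$ picks one outcome of the observable $\hat A=\sum a_j|v_j\rangle\langle v_j|$.

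\textbf{Step 3: pair contexts.} If $C=\{u,w\}$ is an orthogonal pair, the observable is $a_u P_u+a_w P_w+a_\perp P_{C^\perp}$, where $C^\perp$ is spanned by $u\times w$. This vector is not in $\mathcal{E}_{Yu-Oh}$; for $v_4,v_A$ it is $(2,1,1)$. Condition ($O$) rules out $\lambda_{KS}(u)=\lambda_{KS}(w)=1$. Assign the outcome $a_u$, $a_w$ or $a_\perp$ according to whether $\lambda_{KS}(u)=1$, $\lambda_{KS}(w)=1$, or both vanish. Exactly one outcome is therefore selected, and the completeness requirement is met by the third, unlabelled projector.

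\textbf{Step 4: joint consistency.} Finally, check that the outcomes chosen for the various observables are jointly consistent. Two contexts overlap only in shared vectors of $\mathcal{E}_{Yu-Oh}$; the complementary directions $u\times w$ are never shared across contexts in the way that matters. On shared vectors the values agree by construction. This makes $\lambda$ a global section, i.e.\ a global event in the sense of \cite{ab2011}.

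The main obstacle is Step 3 together with Step 4. The paper's definition of context uses maximal cliques, which are not all bases, so the argument must handle the extra complementary projector. It must also confirm that no two distinct pair-contexts share the same complement direction. If they did, the value assigned to $P_{C^\perp}$ would have to agree across those contexts. I would settle this by computing the thirteen-odd cross products explicitly; because $\lambda$ is only required on events in $\mathcal{E}_{Yu-Oh}$, I expect no conflict.
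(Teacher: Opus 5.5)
Your proposal is correct and follows the paper's proof. It uses the same split into the four orthogonal-basis contexts (handled by ($C$)) and the twelve orthogonal-pair contexts (handled by ($O$) plus the complementary projector $P_{C^\perp}$), and the same key check that no two pair contexts share an orthogonal complement, which the paper settles by the explicit list in Table A1. That check does go through: the twelve complements are precisely the twelve distinct directions of the form $(\pm2,\pm1,\pm1)$ up to coordinate permutation and overall sign. It is this computation, rather than the fact that $\lambda_{KS}$ is defined only on $\mathcal{E}_{Yu-Oh}$, that closes your Step 4.
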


\begin{proof}
	Suppose $\lambda_{KS}:\mathcal{E}_{Yu-Oh}\rightarrow\{0,1\}$ is a KS assignment on $\mathcal{E}_{Yu-Oh}$. Let $C_k\in\mathcal{C}(\mathcal{E}_{Yu-Oh})$. According to Figure~\ref{fig1}, $C_k$ is either an orthogonal basis in the 3 dimensional Hilbert space, or a pair of orthogonal vectors.
	
	If $C_k=\{v_{k_1},v_{k_2},v_{k_3}\}$ is in an orthogonal basis, then there is one and only one vector $v_{k_i}$ be assigned 1 and others 0. Then $\lambda_{KS}$ induces a deterministic hidden variable $\lambda$ that assigns outcome $a_i$ to observable $A$, whose corresponding self-adjoint operator is $\hat{A}=\sum\limits_{i=1}^3a_i\frac{|v_{k_i}\rangle\langle v_{k_i}|}{\|v_{k_i}\|^2}$.
	
	If $C_k=\{v_{k_1},v_{k_2}\}$ is a pair of orthogonal vectors, on the one hand, if $\lambda_{KS}(v_{k_1})=\lambda_{KS}(v_{k_2})=0$ , then $\lambda_{KS}$ induces $\lambda$ that assigns outcome $0$ to observable $A$, whose corresponding self-adjoint operator is $\hat{A}=\sum\limits_{i=1}^2a_i\frac{|v_{k_i}\rangle\langle v_{k_i}|}{\|v_{k_i}\|^2}+0\cdot\frac{|v_{k_3}\rangle\langle v_{k_3}|}{\|v_{k_3}\|^2}$, where $v_{k_3}$ is orthogonal complete of $C_k$. On the other hand, suppose there is one vector $v_{k_i}$ be assigned 1 and others 0. Since none of context $C\in\mathcal{C}(\mathcal{E})$ of 2 vectors share the same orthogonal complete as calculated in Table A1 in Appendix~\ref{secA}, one and only one vector $v_{k_i}\in\{v_{k_i}\}_{i=1}^3$ can be assigned 1. Thus $\lambda_{KS}$ induces $\lambda$ that assigns outcome $a_i$ to observable $A$.
\end{proof}

Due to the fact that each deterministic hidden variable characterizes a global event for a certain scenario, so the KS-assignments for Yu-Oh set $\mathcal{E}_{Yu-Oh}$ actually induce the global events for it. The algorithm to enumerate all KS-assignments for a vector set is provided in Appendix~\ref{secA} as Algorithm A1, together with the table of all KS-assignments for Yu-Oh set as Table A2 in Appendix~\ref{secA}.

\begin{notation}
	To be more concise, we also use KS-assignments, denoted as $\lambda$, for a set of vectors $\mathcal{E}$, to characterize deterministic hidden variables, also global events, for $\mathcal{E}$. We denote the set of all $\lambda$ on $\mathcal{E}$ as $\Lambda(\mathcal{E})$, and $\Lambda(v)=\{\lambda|\lambda(v)=1\}$, where $v\in \mathcal{E}$.
\end{notation}

\subsection{logical contextuality}

\begin{definition}
	The \textit{possibiistic distribution} for global events $\Lambda(\mathcal{E})$ is function  $p_{\Lambda}:\Lambda(\mathcal{E})\rightarrow\mathbb{B}=\{0,1\}$ satisfying:
	\begin{equation*}
		\bigvee\limits_{\lambda\in\Lambda(\mathcal{E})}p_{\Lambda}(\lambda)=1
	\end{equation*}
	
	The \textit{marginal distribution} of $p_\Lambda$ for quantum event $v\in\mathcal{E}$ is:
	\begin{equation*}
		p_{\Lambda}|_C(v)=\bigvee\limits_{\lambda\in\Lambda(\mathcal{E}):\lambda(v)=1}p_{\Lambda}(\lambda),
	\end{equation*}
	where $C\in\mathcal{C}(\mathcal{E})$ and $v\in C$.
\end{definition}

In the classical probability theory, the marginal distribution of $p_{\Lambda}$ for quantum event $c$ must be equal to the possibilistic distribution $\overline{\rho}$ induced from quantum state $\rho$, where $\overline{\rho}(v)=\lceil \rho(v) \rceil=\lceil tr(\rho \frac{|v\rangle\langle v|}{\|v\|^2}) \rceil$. And the failure to find such possibilistic distribution $p_{\Lambda}$ for certain scenario witnesses logical contextuality. Thus the definition of logically non-contextual can be generated as follows.

\begin{definition}
	\label{def7}
	Quantum state $\rho$ is logically non-contextual on vector set $\mathcal{E}$, if and only if the following condition holds:
	
	There is a possibilistic distribution $p_{\Lambda}$ on $\Lambda(\mathcal{E})$ satisfying	
	\begin{equation}
		\label{eq1}
		\overline{\rho}(v)= p_{\Lambda}|_C(v),
	\end{equation}
	for all $v\in C$, $C\in\mathcal{M}(\mathcal{E})$.
\end{definition}

In \cite{ab2011}, the contradiction led by $p_{\Lambda}$ is found in a certain scenario by solving a linear system constructed with an incidence matrix. Yet this approach can only verify whether a quantum state is logically contextual on a certain scenario, and cannot enumerate such quantum states. A refinement of Definition~\ref{def7} is necessary.

\begin{proposition}
	\label{prop2}
	Quantum state $\rho$ is logically contextual on vector set $\mathcal{E}$, if and only if there exists $v\in\mathcal{E}$, $\overline{\rho}(v)=1$ such that for all global events $\lambda\in\Lambda(\mathcal{E})$ satisfying $\lambda(v)=1$, there exists $v'\in\mathcal{E}$,  $\overline{\rho}(v')=0$, where $v'\neq v$, $\lambda(v')=1$.
\end{proposition}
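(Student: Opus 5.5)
The plan is to prove the contrapositive form: $\rho$ is logically non-contextual if and only if for every $v\in\mathcal{E}$ with $\overline{\rho}(v)=1$ there is a global event $\lambda\in\Lambda(v)$ such that $\lambda(v')=0$ for all $v'\neq v$ with $\overline{\rho}(v')=0$. Equivalently, every possible event $v$ is contained in some global event $\lambda$ whose support lies entirely inside the possible set $S_\rho=\{u\in\mathcal{E}:\overline{\rho}(u)=1\}$. Call such $\lambda$ \emph{$\rho$-compatible}. The key device is the canonical candidate distribution
\[
p^*_\Lambda(\lambda)=1 \iff \lambda \text{ is } \rho\text{-compatible}.
\]
This is the largest distribution whose marginals never exceed $\overline{\rho}$.

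For the direction ``condition fails $\Rightarrow$ non-contextual'', assume every $v\in S_\rho$ lies in some $\rho$-compatible $\lambda_v$. First, $p^*_\Lambda$ is a genuine possibilistic distribution. Since $\rho$ has trace one, some vector of any context of three basis vectors has positive probability, so $S_\rho\neq\emptyset$, and hence some $\lambda_v$ exists and $\bigvee p^*_\Lambda=1$. Next I check Eq.~(\ref{eq1}) for every $v$.
\begin{itemize}
\item If $\overline{\rho}(v)=1$, then $\lambda_v$ gives $p^*_\Lambda|_C(v)=1$.
\item If $\overline{\rho}(v)=0$, every $\lambda$ with $\lambda(v)=1$ is not $\rho$-compatible. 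So $p^*_\Lambda(\lambda)=0$ for all such $\lambda$, and the marginal is $0$.
\end{itemize}
The marginal as defined does not depend on $C$, so this holds in every context.

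For the converse, assume some possibilistic distribution $p_\Lambda$ satisfies Eq.~(\ref{eq1}). The first step is to show that every $\lambda$ with $p_\Lambda(\lambda)=1$ is $\rho$-compatible. If $\lambda(v')=1$ with $\overline{\rho}(v')=0$, then $p_\Lambda|_C(v')\ge p_\Lambda(\lambda)=1\neq\overline{\rho}(v')$, a contradiction. Here I need every $v'$ to belong to some context $C$, which holds because each vector lies in a maximal pairwise-orthogonal subset. The second step takes any $v$ with $\overline{\rho}(v)=1$. Since $p_\Lambda|_C(v)=1$, some $\lambda$ with $\lambda(v)=1$ has $p_\Lambda(\lambda)=1$, and this $\lambda$ is $\rho$-compatible. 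So the condition of the proposition fails, and negating both sides gives the claimed equivalence.

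The main obstacle is bookkeeping rather than depth: the statement's condition has the quantifier restriction $v'\neq v$, and the marginal is indexed by contexts. I must check that the restriction $v'\neq v$ is harmless, which it is because $\overline{\rho}(v)=1$ means $v$ itself can never serve as the offending $v'$. I must also check that membership of each event in at least one context makes Eq.~(\ref{eq1}) constrain every vector; this is immediate from maximality. The only genuinely quantum input is $S_\rho\neq\emptyset$, obtained from completeness of an orthogonal basis in $\mathcal{E}$ together with $tr(\rho)=1$.
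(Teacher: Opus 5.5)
Your proof is correct and follows the paper's argument. Your $p^*_\Lambda$ is exactly the paper's $p_\Lambda(\lambda)=\bigwedge_{v:\lambda(v)=1}\overline{\rho}(v)$, and your converse is the paper's ``a possible global event has no impossible marginal'' step, which you state correctly as the inequality $p_\Lambda(\lambda)\le\bigwedge_{v':\lambda(v')=1}\overline{\rho}(v')$ rather than the equality the paper asserts. One small loose end: $S_\rho\neq\emptyset$ presumes $\mathcal{E}$ contains an orthogonal basis, but if it does not, the all-zero assignment is a $\rho$-compatible KS-assignment, so $\bigvee p^*_\Lambda=1$ still holds.
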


\begin{proof}
	On the one hand, if quantum state $\rho$ is logically contextual on vector set $\mathcal{E}$:
	
	Suppose for all $v\in\mathcal{E}$ that satisfies $\overline{\rho}(v)=1$, there exists a global event $\lambda\in\Lambda(\mathcal{E})$, $\lambda(v)=1$, such that $\overline{\rho}(v')=1$ for all $v'\in\mathcal{E}$, $\ v'\neq v$ and $\lambda(v')=1$, . Let:$p_{\Lambda}(\lambda)=\bigwedge\limits_{v:\lambda(v)=1}\overline{\rho}(v)$. Since $\bigvee\limits_{\lambda\in\Lambda(\mathcal{E})}p_{\Lambda}(\lambda)=\bigvee\limits_{\lambda\in\Lambda(\mathcal{E})}\bigwedge\limits_{v:\lambda(v)=1}\overline{\rho}(v)=1$, we know $p_{\Lambda}$ is a possibilistic distribution for $\lambda\in\Lambda(\mathcal{E})$.
	
	Then for all $v\in\mathcal{E}$ satisfying $\overline{\rho}(v)=1$:
	\begin{equation*}
		p_{\Lambda}|_C(v)=\bigvee\limits_{\lambda\in\Lambda(\mathcal{E}):\lambda(v)=1}p_{\Lambda}(\lambda)=\bigvee\limits_{\lambda\in\Lambda(\mathcal{E}):\lambda(v)=1}(\bigwedge\limits_{v:\lambda(v)=1}\overline{\rho}(v))=1.
	\end{equation*}
	Thus Equation~\ref{eq1} holds, which leads to contradiction.
	
	On the other hand, if there exists $v\in\mathcal{E}$, $\overline{\rho}(v)=1$, such that there exists $v'\in\mathcal{E}$, $v'\neq v$, $\lambda(v')=1$, $\overline{\rho}(v')=0$ for all global events $\lambda\in\Lambda(\mathcal{E})$, $\lambda(v)=1$:
	
	Suppose quantum state $\rho$ is logically non-contextual on vector set $\mathcal{E}$. Then there is a possibilistic distribution on $\Lambda(\mathcal{E})$, such that $\overline{\rho}(v)=p_{\Lambda}|_C(v)=1$. 
	
	In classical probability theory, we know if a global event is possible, none of its marginal event is impossible. Thus, $	p_{\Lambda}(\lambda)=\bigwedge\limits_{v':\lambda(v')=1}p_{\Lambda}|_{C'}(v')=\bigwedge\limits_{v':\lambda(v')=1}\overline{\rho}(v')$, where $v'\in C'$, $C'\in\mathcal{C}(\mathcal{E})$.
	
	Then,
	\begin{equation*}
		\begin{aligned}
			p_{\Lambda}|_C(v)=\bigvee\limits_{\lambda\in\Lambda(\mathcal{E}):\lambda(v)=1}p_{\Lambda}(\lambda)
			=\bigvee\limits_{\lambda\in\Lambda(\mathcal{E}):\lambda(v)=1}(\bigwedge\limits_{v':\lambda(v')=1}\overline{\rho}(v'))=0,
		\end{aligned}
	\end{equation*}
	which leads to contradiction.
\end{proof}

Proposition~\ref{prop2} provides a refinement for Definition~\ref{def7}, allowing us to specifically depict a quantum state $\rho$ being logically contextual on Yu-Oh set. A similar definition was proposed by Silva within the graph-theoretic approach \cite{silva2017}.

\section{Logically contextual quantum states on Yu-Oh set}\label{sec3}

\subsection{Pure states with logical contextuality on Yu-Oh set}\label{subsec3.1}

According to Proposition~\ref{prop2}, if quantum state $\rho$ is logically contextual on $\mathcal{E}_{Yu-Oh}$, there exists a vector $v_i\in\mathcal{E}_{Yu-Oh}$, $\rho(v_i)=tr(\rho|\frac{v_i\rangle\langle v_i|}{\|v_i\|^2})>0$ and for all global events $\lambda\in\Lambda(\mathcal{E}_{Yu-Oh})$, $\lambda(v_i)=1$, there is vector $v_j\neq v_i$, $ \lambda(v_j)=1$, $\rho(v_j)=0$. Thus we can discuss the selection of $v_i$ as a starting point.

We firstly categorize the 13 vectors depending on how much orthogonal basis $C\subset\mathcal{E}_{Yu-Oh}$ they are in. $V_1=\{v_1,v_2,v_3\}$, each of whose elements is in 2 orthogonal basis, $V_2=\{v_i\}_{i=4}^{9}$, each of whose elements is in 1, and $V_3=\{v_A,v_B,v_C,v_D\}$, each of whose elements is in none.

\begin{proposition}
	\label{prop3}
	For all vector $v\in V_1\cup V_2$, if $\rho(v)>0$, then there exists a global event $\lambda \in \Lambda(\mathcal{E}_{Yu-Oh})$, $\lambda(v)=1$, such that there is $\rho(v_i)>0$ for all quantum event $v_i\in\mathcal{E}_{Yu-Oh}$, $\lambda(v_i)=1$.
\end{proposition}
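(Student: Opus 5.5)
The plan is to argue directly from the exclusivity structure in Figure~\ref{fig1}, building for each $v\in V_1\cup V_2$ a small family of explicit KS-assignments $\lambda$ with $\lambda(v)=1$. I would then use only the fact that $\rho$ is a density operator to show that at least one member of the family puts all of its $1$'s on vectors of positive probability.

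The first step is to reduce the number of cases by symmetry. The group generated by coordinate permutations and sign changes of single coordinates maps $\mathcal{E}_{Yu-Oh}$ to itself up to scalar multiples, and it preserves orthogonality. It therefore maps KS-assignments to KS-assignments. It also preserves the statement, since $\rho$ is replaced by $U\rho U^\dagger$ with $U$ the corresponding orthogonal matrix. This group acts transitively on $V_1=\{v_1,v_2,v_3\}$ and on $V_2=\{v_4,\dots,v_9\}$. So it suffices to treat $v=v_1$ and $v=v_4$. Before doing so I would record two facts. The only orthogonal bases contained in $\mathcal{E}_{Yu-Oh}$ are $\{v_1,v_2,v_3\}$, $\{v_1,v_4,v_7\}$, $\{v_2,v_5,v_8\}$ and $\{v_3,v_6,v_9\}$. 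Completeness is imposed only on these bases, so the vectors of $V_3$ may always be assigned $0$; condition $(O)$ then only has to be checked among the chosen vectors of $V_1\cup V_2$.

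For $v=v_1$, condition $(O)$ forces $\lambda(v_2)=\lambda(v_3)=\lambda(v_4)=\lambda(v_7)=0$. Completeness on $\{v_2,v_5,v_8\}$ and on $\{v_3,v_6,v_9\}$ then lets me choose one of $v_5,v_8$ and one of $v_6,v_9$. Since none of the pairs $v_5v_6$, $v_5v_9$, $v_8v_6$, $v_8v_9$ is orthogonal, all four choices give KS-assignments. If every choice contained a zero-probability vector, then either $\rho(v_5)=\rho(v_8)=0$ or $\rho(v_6)=\rho(v_9)=0$. In the first case the support of $\rho$ lies in $\{v_5,v_8\}^\perp=\mathrm{span}(v_2)$, so $\rho=|v_2\rangle\langle v_2|$, which contradicts $\rho(v_1)>0$. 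The second case gives $\rho=|v_3\rangle\langle v_3|$ in the same way. Because this uses only the kernel of $\rho$, mixed states are handled with no extra work.

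For $v=v_4$, condition $(O)$ forces $\lambda(v_1)=\lambda(v_7)=0$, and also $\lambda(v_A)=\lambda(v_D)=0$, which is harmless. I would pick one vector $a\in\{v_2,v_5,v_8\}$ and one $b\in\{v_3,v_6,v_9\}$. The only forbidden pair is $(v_2,v_3)$; all other pairs are non-orthogonal, which I will verify by computing inner products. Since $\sum_{u\in\{v_2,v_5,v_8\}}\rho(u)=1$, some $a$ has $\rho(a)>0$, and likewise some $b$ has $\rho(b)>0$. The only way every admissible pair contains a zero is that $v_2$ is the unique positive vector in its basis and $v_3$ is the unique positive vector in its basis. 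But $\rho(v_5)=\rho(v_8)=0$ forces $\rho=|v_2\rangle\langle v_2|$, hence $\rho(v_3)=0$, a contradiction.

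The main obstacle I expect is the bookkeeping rather than any conceptual difficulty. I must check carefully that the symmetry group really permutes $\mathcal{E}_{Yu-Oh}$ and acts transitively on $V_2$. For example, the sign change of the third coordinate sends $v_4\leftrightarrow v_7$ and $v_5\leftrightarrow v_8$, and coordinate permutations move $v_4$ to $v_5$ and to $v_6$. I must also verify all the non-orthogonality claims among the chosen vectors against Figure~\ref{fig1}. If transitivity on $V_2$ failed, the fallback would be to redo the $v_4$ argument verbatim for each remaining vector of $V_2$, which has the same structure.
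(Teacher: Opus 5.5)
Your symmetry reduction is sound: signed permutation matrices permute $\mathcal{E}_{Yu-Oh}$ up to scalars and act transitively on $V_1$ and on $V_2$, which justifies the paper's appeal to the symmetry of Figure~\ref{fig1} more carefully than the paper does. Your $v=v_1$ case is correct and is the paper's argument. The $v=v_4$ case, however, has a genuine gap. When you choose $a\in\{v_2,v_5,v_8\}$ and $b\in\{v_3,v_6,v_9\}$ you check only orthogonality. But completeness must also hold on the basis $\{v_1,v_2,v_3\}$, and neither $v_4$ nor your choice of $a,b$ covers that basis automatically. Since $\lambda(v_1)=0$ is forced, exactly one of $v_2,v_3$ must be assigned $1$. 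So besides $(v_2,v_3)$, the pairs $(v_5,v_6),(v_5,v_9),(v_8,v_6),(v_8,v_9)$ are also inadmissible; for instance $\{v_4,v_5,v_6\}$ assigns $0$ to all of $v_1,v_2,v_3$. The only global events containing $v_4$ have cores $\{v_2,v_4,v_6\}$, $\{v_2,v_4,v_9\}$, $\{v_3,v_4,v_5\}$, $\{v_3,v_4,v_8\}$ (cf.\ Table A2).

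Your remark that $(O)$ ``only has to be checked among the chosen vectors'' holds in the $v_1$ case, where every basis already contains a chosen vector, but not here. A symptom is that your derivation of a pair $(a,b)$ with $\rho(a),\rho(b)>0$ never uses $\rho(v_4)>0$. It would therefore apply equally to $\rho=|v_1\rangle\langle v_1|$. For that state $\rho(v_5),\rho(v_6)>0$, yet every genuine global event through $v_4$ contains $v_2$ or $v_3$, both of probability $0$.

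The missing step is to use the hypothesis on $v_4$ to constrain the basis $\{v_1,v_2,v_3\}$, as the paper does. Since $v_4\perp v_1$ and $\rho(v_4)>0$, we get $\rho(v_1)<1$, hence $\rho(v_2)+\rho(v_3)>0$. If $\rho(v_2)>0$, then $\rho(v_3)<1$, so completeness on $\{v_3,v_6,v_9\}$ gives $b\in\{v_6,v_9\}$ with $\rho(b)>0$, and $\{v_2,v_4,b\}$ is the required global event. If instead $\rho(v_3)>0$, then $\rho(v_2)<1$ gives $a\in\{v_5,v_8\}$ with $\rho(a)>0$, and $\{v_3,v_4,a\}$ works. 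With this repair your proof becomes essentially the paper's.
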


\begin{proof}
	If $v\in V_1$:
		
	Without loss of generality\footnote{This can be derived from the rotational symmetry property of Figure~\ref{fig1}, so is the reason for the similar argument in the second case.}, let $v=v_1$. Then there is $i\in \{5, 8\}$ such that $\rho(v_i)>0$. Otherwise, according to completeness($O$), $\rho(v_5)=\rho(v_8)=0\implies \rho(v_2)=1$, which contradicts that $\rho(s)=\rho(v_1)>0$. Without loss of generality, let $\rho(v_5)>0$. Similarly, there is $i\in \{6, 9\}$ such that $\rho(v_i)>0$. Without loss of generality, let $\rho(v_6)>0$. Till now, we have found a global event $\lambda$:
	$\lambda(v_i)=\begin{cases}
		1 & i\in \{1,5,6\}, \\
		0 & otherwise. \\
	\end{cases}$
	, satisfying for all marginal event $v_i$ such that $\lambda(v_i)=1$, $v_i\in\mathcal{E}_{Yu-Oh}$, there is $\rho(v_i)>0$.
		
	If $v\in V_2$:
		
	Without loss of generality, let $v=v_4$. Then there is $i\in \{2, 3\}$ such that $\rho(v_i)>0$. Otherwise, according to completeness($O$), $\rho(v_2)=\rho(v_3)=0\implies \rho(v_1)=1$, which contradicts that $\rho(s)=\rho(v_4)>0$. Without loss of generality, let $\rho(v_2)>0$.Then there is $i\in \{6, 9\}$ such that $\rho(v_i)>0$. Otherwise, according to completeness($O$), $\rho(v_6)=\rho(v_9)=0\implies \rho(v_3)=1$, which contradicts that $\rho(v_2)>0$. Without loss of generality, let $\rho(v_6)>0$.Till now, we have found a global event $\lambda$:
		$\lambda(v_i)=\begin{cases}
			1 & i\in \{2,4,6\}, \\
			0 & otherwise. \\
		\end{cases}$
		, satisfying for all marginal event $v_i$, $\lambda(v_i)=1$, there is $\rho(v_i)>0$.
	
	In summary, for all $s\in V_1\cup V_2$, if $\rho(s)>0$, then there exists $\lambda \in \Lambda(\mathcal{E}_{Yu-Oh}): \lambda(s)=1$ such that for all marginal event $v_i$, $\lambda(v_i)=1\ v_i\in\mathcal{E}_{Yu-Oh}$, there is $\rho(v_i)>0$.
\end{proof}

\begin{notation}
	To make it easier to express, we will use sets to characterize global events later. For example, we use $s_\lambda=\{v_1,v_5,v_6\}$ to characterize $\lambda(v_i)=\begin{cases}
		1 & i\in \{1,5,6\}, \\
		0 & otherwise. \\
	\end{cases}$. We call $s_\lambda$ a \textit{global-event-corresponding set} and use $S_{\Lambda}(\mathcal{E})$ to denote the set of all $s_{\lambda}$ on event set $\mathcal{E}$ and $S_{\Lambda}(v)=\{s_{\lambda}|v\in s_{\lambda}\}$.
\end{notation}

According to Proposition~\ref{prop3}, we can only choose from $V_3$, while selecting '$v$' in Definition~\ref{def7}. Otherwise, a logical contextual quantum state on Yu-Oh scenario may also be found, but this can not be witnessed simply by vectors in Yu-Oh set. We do not consider such cases and they are not helpful when constructing Hardy's paradox for Yu-Oh set. We use an algorithm, given in the appendix along with the table of all global events on Yu-Oh set, to list out $S_{\Lambda}(v)$ for all $v\in V_3$.

\begin{equation}
	\tag{*}
	\label{globalevents}
	\begin{aligned}
		S_{\Lambda}(v_A)= & \{ \{v_1,v_5,v_6,v_A\},\{v_2,v_6,v_7,v_A\},\{v_3,v_5,v_7,v_A\}\} \\
		S_{\Lambda}(v_B)= & \{ \{v_1,v_6,v_8,v_B\},\{v_2,v_4,v_6,v_B\},\{v_3,v_4,v_8,v_B\}\} \\
		S_{\Lambda}(v_C)= & \{ \{v_1,v_5,v_9,v_C\},\{v_2,v_4,v_9,v_C\},\{v_3,v_4,v_5,v_C\}\} \\
		S_{\Lambda}(v_D)= & \{ \{v_1,v_8,v_9,v_D\},\{v_2,v_7,v_9,v_D\},\{v_3,v_7,v_8,v_D\}\}
	\end{aligned}
\end{equation}

To search for logically contextual pure states, we need to let pure state $|\psi\rangle$, corresponding to quantum state $\rho=|\psi\rangle\langle\psi|$, satisfy that there exists vector $v_k\in V_3$ such that $\rho(v_k)=tr(|\psi\rangle\langle \psi|\frac{|v_k\rangle\langle v_k|}{\|v_k\|^2})=\frac{\|\langle v_k|\psi\rangle\|^2}{\|v_k\|^2}>0$, which implies $\|\langle v_k|\psi\rangle\|\neq0$. And for all $s_{\lambda}^i\in S_{\Lambda}(v_k)$, $i=\{1,2,3\}$ as shown in (\ref{globalevents}), there is $v_{k_i}\in s_{\lambda}^i$ such that $\rho(v_{k_i})=tr(|\psi\rangle\langle \psi|\frac{|v_{k_i}\rangle\langle v_{k_i}|}{\|v_{k_i}\|^2})=0$, which implies $\langle v_{k_i}|\psi\rangle=0$. Thus for $k\in\{A,B,C,D\}$ we only need to enumerate all possible multiset $\{v_{k_i}|v_{k_i}\in s_{\lambda}^i, s_{\lambda}^i\in S_{\Lambda}(v_k), i=\{1,2,3\}\}$, and solve linear equations $\begin{cases}
	\langle v_{k_1}|\psi\rangle=0 \\
	\langle v_{k_2}|\psi\rangle=0 \\
	\langle v_{k_3}|\psi\rangle=0 \\
\end{cases}$ to obtain some non-zero pure states. Then we can select the ones satisfying $\rho(v_k)=\|\langle v_k|\psi\rangle\|\neq0$, so as to exhaust pure states with logical contextuality on Yu-Oh set.

We use the following algorithm to realize the calculation above. It is worth mentioning that the calculations performed by our proposed algorithm are all conducted in the complex domain $\mathbb{C}$.

\begin{algorithm}[H]
	\caption{Algorithm to exhaust pure states with logical contextuality on Yu-Oh set}
	\label{alg1}
	\renewcommand{\algorithmicrequire}{\textbf{Input:}}
	\renewcommand{\algorithmicensure}{\textbf{Output:}}
	
	\begin{algorithmic}[1]
		\Require vector set: $V_3$, $\{S_{\Lambda}(v_k)|v_k\in V_3\}$ 
		\Ensure set of logically contextual pure states on Yu-Oh set: $\Psi$
		
		\State $\Psi=[]$
		
		\For{each $v_k\in V_3$}
		\For{each $(v_{k_1},v_{k_2},v_{k_3})|v_{k_i}\in s_{\lambda}^i$, $s_{\lambda}^i\in S_{\Lambda}(v_k)$, $i\in\{1,2,3\}$}
		\State solve linear equations: $\begin{cases}
			\langle v_{k_1}|\psi\rangle=0 \\
			\langle v_{k_2}|\psi\rangle=0 \\
			\langle v_{k_3}|\psi\rangle=0 \\
			\end{cases}$
		\If{$\psi\neq 0$ and $\langle v_k|\psi\rangle\neq 0$}
		\State add $|\psi\rangle$ into $\Psi$
		\EndIf
		\EndFor
		\EndFor
		
		\Return $\Psi$
		
	\end{algorithmic}
\end{algorithm}

We finally achieve the conclusion below.

\begin{theorem}
	\label{the2}
	The set of all pure states, omitted normalization, being logically contextual on Yu-Oh set is as follows.
	\begin{equation*}
		\{(1,1,1)^T,(1,-1,1)^T,(1,1,-1)^T,(-1,1,1)^T\}
	\end{equation*}
\end{theorem}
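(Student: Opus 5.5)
By Proposition~\ref{prop2}, a pure state $\rho=|\psi\rangle\langle\psi|$ is logically contextual on $\mathcal{E}_{Yu-Oh}$ iff some $v\in\mathcal{E}_{Yu-Oh}$ has $\overline{\rho}(v)=1$ and every $\lambda\in\Lambda(v)$ hits some $v'\neq v$ with $\overline{\rho}(v')=0$. Proposition~\ref{prop3} holds for every state, so it rules out any $v\in V_1\cup V_2$ as the starting vector. The starting vector must therefore be some $v_k\in V_3=\{v_A,v_B,v_C,v_D\}$. Using the list (\ref{globalevents}), logical contextuality of $|\psi\rangle$ is then equivalent to the following: there exist $v_k\in V_3$ and a choice of one vector $v_{k_i}\in s^i_\lambda\setminus\{v_k\}$ from each of the three sets in $S_\Lambda(v_k)$ such that
\[
\langle v_{k_1}|\psi\rangle=\langle v_{k_2}|\psi\rangle=\langle v_{k_3}|\psi\rangle=0
\quad\text{and}\quad
\langle v_k|\psi\rangle\neq0 .
\]
The vector $v_k$ itself can never be the chosen $v_{k_i}$, since $\rho(v_k)>0$. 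So there are exactly $27$ triples per $v_k$, and the proof reduces to running Algorithm~\ref{alg1} by hand.

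To keep the casework small, I will first check that the symmetries of Figure~\ref{fig1} permute $V_3$ transitively and carry the families $S_\Lambda(v_k)$ onto one another. These are the coordinate permutations together with sign changes of coordinates, read projectively. It then suffices to treat $v_k=v_A$, whose three sets minus $v_A$ are $\{v_1,v_5,v_6\}$, $\{v_2,v_6,v_7\}$ and $\{v_3,v_5,v_7\}$.

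A nonzero $\psi$ exists iff the chosen triple spans at most a $2$-dimensional subspace. I will split the $27$ triples by how many distinct vectors they contain.

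\textbf{Two distinct vectors.} This happens when a shared vector ($v_5$, $v_6$ or $v_7$) is picked twice and any third vector is picked. Then $\psi$ is the cross product of the two distinct vectors, and the triple always has rank $2$, so each such case contributes the single ray given by that cross product. For example, picking $v_6$ twice and $v_5$ gives $\psi\propto v_6\times v_5\propto(1,1,1)^T=v_D$, and $\langle v_A|v_D\rangle=1\neq0$.

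\textbf{Three distinct vectors.} I will compute the $3\times3$ determinant of each triple; for instance $\det(v_5,v_6,v_7)=-2$. I expect nearly all of these to be nonsingular and hence to give only $\psi=0$. For any triple that turns out singular, I will compute the common orthogonal vector and test it against $v_A$.

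Collecting the surviving rays for $v_A$ should give a subset of $V_3$. Transporting them by the symmetries then yields exactly the four rays $(1,1,1)^T$, $(1,-1,1)^T$, $(1,1,-1)^T$ and $(-1,1,1)^T$, that is, $V_3$ itself. Conversely, each of these four states is logically contextual, as witnessed by the explicit triples found above together with Proposition~\ref{prop2}. This gives equality of the two sets.

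The main obstacle is making the enumeration provably exhaustive and correct. This has three parts. First, the symmetry reduction must preserve both the exclusivity structure and the list (\ref{globalevents}). Second, none of the rank computations for the distinct triples may be overlooked. Third, the computation must be over $\mathbb{C}$, although the vectors are real, so complex solutions add nothing beyond complex scalings of the same rays. If the symmetry argument proves awkward to state cleanly, I will instead run through all $4\times27$ triples directly, organized in a table.
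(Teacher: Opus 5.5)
Your proposal is correct and follows the paper's own route. Propositions~\ref{prop2} and~\ref{prop3} force the witness into $V_3$, and the theorem then reduces to solving the Algorithm~\ref{alg1} linear systems for the $27$ triples taken from (\ref{globalevents}). Your symmetry reduction to $v_A$ is sound, and the casework comes out as you expect: all $18$ triples of three distinct vectors are nonsingular, while the six repeated-pick pairs give $v_8, v_9, v_4$ (orthogonal to $v_A$, so discarded) and $v_D, v_B, v_C$ (kept).
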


\subsection{From pure states to general quantum states}\label{subsec3.2} 

\begin{notation}
	A mixed state is denoted as $\rho=\sum\limits_{i=1}^np_i|\psi_i\rangle\langle\psi_i|$, where $p_i>0$ and $\sum\limits_ip_i=1$. Pure states $|\psi_i\rangle$ corresponds to density operator $\rho_i=|\psi_i\rangle\langle\psi_i|$.
\end{notation}

In this section, we investigate the relation between the possibilistic property of a certain event under a mixed state $\rho=\sum\limits_{i=1}^np_i|\psi_i\rangle\langle\psi_i|$ and it under all pure states $|\psi_i\rangle$,  so as to use the structure of pure states to describe mixed states being logically contextual on Yu-Oh set. 

\begin{proposition}
	\label{prop5}
	Given mixed state $\rho=\sum\limits_{i=1}^np_i|\psi_i\rangle\langle\psi_i|$ and quantum event $v\neq0$, $\overline{\rho}(v)=0$ if and only if $\overline{\rho_i}(v)=0$, for all $i \in \{1,2,...,n\}$, where $\rho_i=|\psi_i\rangle\langle\psi_i|$.
\end{proposition}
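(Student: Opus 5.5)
The plan is to unfold the definition of $\overline{\rho}$ and use linearity of the trace together with the nonnegativity of the summands. Write $P_v=\frac{|v\rangle\langle v|}{\|v\|^2}$, which is well defined because $v\neq0$. By definition $\overline{\rho}(v)=\lceil tr(\rho P_v)\rceil$. Since $tr(\rho P_v)\in[0,1]$, we have $\overline{\rho}(v)=0$ exactly when $tr(\rho P_v)=0$. The same holds for each $\rho_i$. So the statement reduces to showing that $tr(\rho P_v)=0$ if and only if $tr(\rho_i P_v)=0$ for all $i$.

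The key computation comes from linearity of the trace:
\begin{equation*}
tr(\rho P_v)=\sum_{i=1}^n p_i\, tr(\rho_i P_v)=\sum_{i=1}^n p_i\,\frac{|\langle v|\psi_i\rangle|^2}{\|v\|^2}.
\end{equation*}
Each term $\frac{|\langle v|\psi_i\rangle|^2}{\|v\|^2}$ is a nonnegative real, and each weight $p_i>0$ by the Notation fixing the mixed-state decomposition.

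\emph{Sufficiency.} If $tr(\rho_i P_v)=0$ for every $i$, then the sum is zero, so $\overline{\rho}(v)=0$.

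\emph{Necessity.} If the sum is zero, then since it is a sum of nonnegative terms, every term $p_i\,tr(\rho_i P_v)$ vanishes. Dividing by $p_i>0$ gives $tr(\rho_i P_v)=0$, equivalently $\langle v|\psi_i\rangle=0$, so $\overline{\rho_i}(v)=0$ for all $i$.

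There is no real obstacle. The only point needing care is the strict positivity $p_i>0$: it is exactly what allows a vanishing weighted sum to force every individual term to vanish. That positivity is guaranteed by the stated notation. One should also record that $\lceil x\rceil=0$ iff $x=0$ for $x\in[0,1]$, so that passing between $\overline{\rho}$ and the trace is legitimate.
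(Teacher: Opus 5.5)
Your proposal is correct and follows essentially the same route as the paper: it writes $\rho(v)=\sum_i p_i\rho_i(v)$ by linearity and then uses nonnegativity of the terms together with $p_i>0$. You are also right to spell out that $\lceil x\rceil=0$ iff $x=0$ for $x\ge 0$ and that $p_i>0$ is what forces each term to vanish; the paper uses both facts but states neither explicitly.
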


\begin{proof}
	\begin{align*}
		\|v\|^2\cdot\rho(v)=&tr(\rho|v\rangle\langle v|)=\langle v|\rho|v\rangle=\langle v|(\sum\limits_{i=1}^np_i|\psi_i\rangle\langle\psi_i|)|v\rangle=\sum\limits_{i=1}^np_i\langle v|\psi_i\rangle\langle\psi_i|v\rangle \\
		=&\sum\limits_{i=1}^np_i\|\langle v|\psi_i\rangle\|^2 = \|v\|^2\cdot\sum\limits_{i=1}^np_i\rho_i(v).
	\end{align*}
	
	Thus $\rho(v)=\sum\limits_{i=1}^np_i\rho_i(v)$. Due to the non-negativity of quantum probability, we can deduce that: $\rho(v)=0$ if and only if $\rho_i(v)=0$ for all $i \in \{1,2,...,n\}$. Thus $\overline{\rho}(v)=0$ if and only if $\overline{\rho_i}(v)=0$ for all $i \in \{1,2,...,n\}$.
\end{proof}

Consider the negation of Proposition~\ref{prop5}, and we can get:

\begin{corollary}
	\label{cor}
	Given mixed state $\rho=\sum\limits_{i=1}^np_i|\psi_i\rangle\langle\psi_i|$ and quantum event $v$. $\overline{\rho}(v)=1$ if and only if there exists $i\in\{1,2,...,n\}$ such that $\overline{\rho_i}(v)=1$.
\end{corollary}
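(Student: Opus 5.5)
The corollary is the contrapositive of Proposition~\ref{prop5}, so the proof will rest entirely on that proposition together with the fact that $\overline{\rho}(v)$ and $\overline{\rho_i}(v)$ take values only in $\mathbb{B}=\{0,1\}$.

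The plan is as follows. Fix the mixed state $\rho=\sum_{i=1}^np_i|\psi_i\rangle\langle\psi_i|$ with $p_i>0$, and fix a nonzero quantum event $v$. Proposition~\ref{prop5} gives the biconditional
\[
\overline{\rho}(v)=0 \iff \forall i\in\{1,\dots,n\}:\ \overline{\rho_i}(v)=0 .
\]
Negating both sides gives
\[
\overline{\rho}(v)\neq 0 \iff \exists i\in\{1,\dots,n\}:\ \overline{\rho_i}(v)\neq 0 .
\]
Since $\overline{\rho}(v)=\lceil\rho(v)\rceil$ with $0\le\rho(v)\le 1$, the value $\overline{\rho}(v)$ lies in $\{0,1\}$, so $\overline{\rho}(v)\neq0$ is the same as $\overline{\rho}(v)=1$. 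The same holds for each $\overline{\rho_i}(v)$. Substituting these equivalences yields exactly the corollary.

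As a sanity check independent of Proposition~\ref{prop5}, I would also note the direct argument from the identity $\rho(v)=\sum_i p_i\rho_i(v)$ proved there. All $p_i>0$ and all $\rho_i(v)\ge0$, so the sum is strictly positive if and only if at least one term $\rho_i(v)$ is strictly positive.

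There is no genuine obstacle here. The only point needing care is justifying that the ceiling maps $[0,1]$ into $\{0,1\}$. This requires $\rho(v)\le1$, which holds because $|v\rangle\langle v|/\|v\|^2$ is a projector and $\rho$ is a density operator with trace $1$. I would state this explicitly before performing the negation.
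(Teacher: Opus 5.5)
Your proposal is correct and follows the paper's approach: the paper obtains the corollary by negating both sides of Proposition~\ref{prop5}, just as you do. Your explicit remark that $\overline{\rho}(v)$ and $\overline{\rho_i}(v)$ take values in $\{0,1\}$, so that ``$\neq 0$'' means ``$=1$'', fills in a step the paper leaves implicit.
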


We apply Proposition~\ref{prop5} and Corollary~\ref{cor} on Proposition~\ref{prop2}, so as to get Proposition~\ref{prop6}.

\begin{proposition}
	\label{prop6}
	Mixed state $\rho=\sum\limits_{i=1}^np_i|\psi_i\rangle\langle\psi_i|$ is logically contextual on vector set $\mathcal{E}$ is equivalent to the following conditions.
	\begin{itemize}
		\item There exists $v \in \mathcal{E}$, such that there exists $i \in \{1,2,...,n\}: \overline{\rho_i}(v)=1$.
		\item For all $\lambda \in \Lambda(\mathcal{E})$, $\lambda(v)=1$, there is $v' \in \mathcal{E}$, $\lambda(v')=1$, such that $\overline{\rho_i}(v')=0$.
	\end{itemize}
\end{proposition}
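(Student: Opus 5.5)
The plan is to obtain Proposition~\ref{prop6} by rewriting Proposition~\ref{prop2} for the mixed state $\rho=\sum_{i=1}^np_i|\psi_i\rangle\langle\psi_i|$. Every occurrence of $\overline{\rho}(\cdot)$ in that characterization will be replaced by a statement about the pure components $\rho_i=|\psi_i\rangle\langle\psi_i|$. The two translation rules are already available: Corollary~\ref{cor} converts $\overline{\rho}(v)=1$ into ``there exists $i$ with $\overline{\rho_i}(v)=1$'', and Proposition~\ref{prop5} converts $\overline{\rho}(v')=0$ into ``$\overline{\rho_i}(v')=0$ for all $i\in\{1,\dots,n\}$''. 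No new estimate is needed; the argument is a chain of logical equivalences.

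\textbf{Step 1 (starting point).} By Proposition~\ref{prop2}, $\rho$ is logically contextual on $\mathcal{E}$ if and only if there exists $v\in\mathcal{E}$ with $\overline{\rho}(v)=1$ such that every $\lambda\in\Lambda(\mathcal{E})$ with $\lambda(v)=1$ admits some $v'\neq v$ with $\lambda(v')=1$ and $\overline{\rho}(v')=0$.

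\textbf{Step 2 (first bullet).} Apply Corollary~\ref{cor} to the condition $\overline{\rho}(v)=1$. It is equivalent to the existence of some $i$ with $\overline{\rho_i}(v)=1$, which is exactly the first bullet.

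\textbf{Step 3 (second bullet).} Apply Proposition~\ref{prop5} to the witness $v'$ for each $\lambda$. The condition $\overline{\rho}(v')=0$ is equivalent to $\overline{\rho_i}(v')=0$ holding for every component $i$, which gives the second bullet. Because each substitution is an ``if and only if'' applied to a fixed vector, both directions of the equivalence follow simultaneously. The quantifier structure of Proposition~\ref{prop2} is untouched: $\exists v$, then $\forall\lambda\ni v$, then $\exists v'$.

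The main obstacle is fixing the reading of the index in the second bullet, where the statement writes $\overline{\rho_i}(v')=0$ without a quantifier on $i$. I will state explicitly that this index ranges over all $i\in\{1,\dots,n\}$ and is independent of the $i$ appearing in the first bullet. That is the reading forced by Proposition~\ref{prop5}, and it is what the later non-existence arguments for mixed states need.

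I will add a short remark that the weaker reading is not equivalent. If $\overline{\rho_i}(v')=0$ were required only for the particular component with $\overline{\rho_i}(v)=1$, this would merely say that this one pure component is logically contextual. That does not transfer to $\rho$, since another component may give $v'$ positive probability. With the quantifiers made explicit in this way, the substitution argument closes the proof.
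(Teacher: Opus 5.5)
Your proposal is correct and is exactly the paper's argument: the paper obtains Proposition~\ref{prop6} in one line, by substituting Proposition~\ref{prop5} and Corollary~\ref{cor} into Proposition~\ref{prop2}. Your explicit reading of the second bullet, ``there is $v'$ with $\overline{\rho_i}(v')=0$ for all $i$'', is the one the paper relies on later in the proof of Proposition~\ref{prop7}, where it requires $\overline{\rho_j}(v_{k_i})=0$ for all $j\in J$.
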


To characterize the structure of logically contextual mixed states on Yu-Oh set, we first give notations as follows.
\begin{notation}
	\label{not5}
	Given vector $v_k$, select $v_{k_i}\in s_{\lambda}^i$, $s_{\lambda}^i\in S_{\Lambda}(v_k)$. We solve linear equations $\begin{cases}
	\langle v_{k_1}|\psi\rangle=0 \\
	\langle v_{k_2}|\psi\rangle=0 \\
	\langle v_{k_3}|\psi\rangle=0 \\
\end{cases}$ to obtain its solution space $\Psi_k$, the vector in which we categorize into 2 classes, namely, $M=\{m\big|\overline{\rho_m}(v_k)=1, \rho_m=|\psi_m\rangle\langle\psi_m|\}$, and $N=\{n\big|\overline{\rho_n}(v_k)=0, \rho_n=|\psi_n\rangle\langle\psi_n|\}$. $M \cap N=\emptyset$ and $\{\psi_m\}_{m\in M} \cup \{\psi_n\}_{n\in N}=\Psi_k$.
\end{notation}

Proposition~\ref{prop6} gives the structure of logically contextual mixed states on Yu-Oh set. Using the notations given in Notation~\ref{not5}, it can be described as Proposition~\ref{prop7}:
\begin{proposition}
	\label{prop7}
	The mixed states being logically contextual on Yu-Oh set must have structure as:
	\begin{equation*}
		\rho=\sum\limits_{m\in M}p_m|\psi_m\rangle\langle\psi_m|+\sum\limits_{n\in N}p_n|\psi_n\rangle\langle\psi_n|,
	\end{equation*}
	where $k\in\{A,B,C,D\}$, $p_m,p_n\in[0,1]$,  $\sum\limits_mp_m+\sum\limits_np_n=1$. And there exists $p_{m_0} \neq 0$ and $m_o\in M$.
\end{proposition}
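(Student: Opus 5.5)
The plan is to unfold Proposition~\ref{prop6} for the specific scenario $\mathcal{E}_{Yu-Oh}$, using Proposition~\ref{prop3}, the list (\ref{globalevents}), and Proposition~\ref{prop5}.

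First, fix the starting event. Suppose $\rho=\sum_i p_i|\psi_i\rangle\langle\psi_i|$ is logically contextual. By Proposition~\ref{prop2}, some $v$ with $\overline{\rho}(v)=1$ has the property that every global event through $v$ contains an impossible event. Proposition~\ref{prop3} applies verbatim to the mixed state $\rho$: its proof only uses completeness and positivity of $\rho(\cdot)$. It therefore rules out $v\in V_1\cup V_2$. Hence $v=v_k$ with $k\in\{A,B,C,D\}$.

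Second, translate the zero conditions. By (\ref{globalevents}), $S_\Lambda(v_k)$ consists of exactly three sets $s^1_\lambda,s^2_\lambda,s^3_\lambda$. The condition in Proposition~\ref{prop2} supplies, for each $i$, some $v_{k_i}\in s^i_\lambda\setminus\{v_k\}$ with $\overline{\rho}(v_{k_i})=0$. Proposition~\ref{prop5} turns each $\overline{\rho}(v_{k_i})=0$ into $\langle v_{k_i}|\psi_j\rangle=0$ for every component $\psi_j$ with $p_j>0$. So all components lie in the common solution space $\Psi_k$ of the linear system in Notation~\ref{not5}, for this particular choice of $(v_{k_1},v_{k_2},v_{k_3})$. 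Components with $p_j=0$ may be dropped or kept with zero weight. Splitting the $\psi_j$ according to whether $\overline{\rho_j}(v_k)$ equals $1$ or $0$ sorts them into the classes $M$ and $N$, which gives the displayed decomposition with $\sum_m p_m+\sum_n p_n=1$.

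Third, ensure some $p_{m_0}\neq 0$. We have $\overline{\rho}(v_k)=1$, so Corollary~\ref{cor} gives a component $\psi_{m_0}$ with $p_{m_0}>0$ and $\overline{\rho_{m_0}}(v_k)=1$, that is, $m_0\in M$.

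The main obstacle is bookkeeping rather than mathematics. We must make sure that a single choice of $(v_{k_1},v_{k_2},v_{k_3})$ works simultaneously for all components. This holds because the impossible events are determined by $\rho$ itself, not by its individual components. We must also justify reusing Proposition~\ref{prop3} for mixed $\rho$. To handle the latter, I would re-read that proof with $\rho$ general and note that the step "$\rho(v_5)=\rho(v_8)=0\implies\rho(v_2)=1$" uses only $\mathrm{tr}\rho=1$ on orthogonal bases.
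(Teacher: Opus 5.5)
Your proposal is correct and follows essentially the same route as the necessity half of the paper's proof. Proposition~\ref{prop2} supplies $v_k$ and the three impossible events $v_{k_i}$; the paper leaves the restriction $v_k\in V_3$ implicit, and you rightly justify it via Proposition~\ref{prop3}. Proposition~\ref{prop5} then forces every component $|\psi_j\rangle$ into $\Psi_k$, and a component with $p_{j_0}>0$ and $\overline{\rho_{j_0}}(v_k)=1$ gives $m_0\in M$. The paper's proof also checks the converse, namely that any such $\rho$ is logically contextual, but the statement as phrased does not require it.
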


\begin{proof}
	Suppose $\rho$ has the aforementioned structure. The probability of vector $v_k$ under $\rho$ is greater than $0$, thus:
	\begin{align*}
		\rho(v_k)&=tr(\rho\frac{|v_k\rangle\langle v_k|}{\|v_k\|^2})=\frac{1}{\|v_k\|^2}\cdot\langle v_k|\rho|v_k\rangle \\
		&=\frac{1}{\|v_k\|^2}\cdot(\sum\limits_{m\in M}p_m\langle v_k|\psi_m\rangle\langle\psi_m|v_k\rangle+\sum\limits_{n\in N}p_n\langle v_k|\psi_n\rangle\langle\psi_n|v_k\rangle) \\
		&=\sum\limits_{m\in M}p_m\rho_m(v_k)+\sum\limits_{n\in N}p_n\rho_n(v_k) \ge p_{m_0}\rho_{m_0}(v_k)>0
	\end{align*}

	The probability of vector $v_{k_i}$ under $\rho$, $i\in\{1,2,3\}$:
	\begin{equation*}
		\rho(v_{k_i})=\sum\limits_{m\in M}p_m\rho_m(v_{k_i})+\sum\limits_{n\in N}p_n\rho_n(v_{k_i})=0
	\end{equation*}
	Then it is verified that $\rho$ is logically contextual on Yu-Oh set.
	
	On the other hand, if $\rho=\sum\limits_{j\in J}p_j|\psi_j\rangle\langle\psi_j|$ is logically contextual on Yu-Oh set, according to Proposition~\ref{prop2}:
	
	(1)there exists vector $v_k \in \{v_A,v_B,v_C,v_D\}$ and $j_0\in J$, such that $p_{j_0}>0$ and $\overline{\rho_{j_0}}(v_k)=1$.
	
	(2)For all global events $s_{\lambda}^i\in S_{\Lambda}(v_k), S_{\Lambda}(v_k)=\{s_{\lambda}^i\}_{i=1}^3$, there is $v_{k_i}\in s_{\lambda}^i$, such that $\overline{\rho_j}(v_{k_i})=\lceil \frac{\langle v_{k_i}|\psi_j\rangle\langle\psi_j|v_{k_i}\rangle}{\|v_{k_i}\|^2} \rceil=0$ for all $j\in J$. Thus $|\psi_j\rangle$ lies in the solution space of linear equations:
	\begin{equation*}
		\begin{cases}
			\langle v_{k_1}|\psi\rangle=0 \\
			\langle v_{k_2}|\psi\rangle=0 \\
			\langle v_{k_3}|\psi\rangle=0 \\
		\end{cases}.
	\end{equation*} 
	
	According to (2), $\rho$ have the structure as:
	\begin{equation*}
		\rho=\sum\limits_{m\in M}p_m|\psi_m\rangle\langle\psi_m|+\sum\limits_{n\in N}p_n|\psi_n\rangle\langle\psi_n|,
	\end{equation*}
	where $p_m,p_n\in[0,1]$,  $\sum\limits_mp_m+\sum\limits_np_n=1$.
	
	According to (1), there exists $j_0\in J$, such that $p_{j_0}>0$ and $\overline{\rho_{j_0}}(v_k)=1$, $v_k\in\{v_A,v_B,v_C,v_D\}$. Let $|\psi_{m_0}\rangle=|\psi_{j_0}\rangle$, then there is $p_{m_0}=p_{j_0}>0$ and $\psi_{m_0}\in \Psi_M$.
\end{proof}

But by analyzing the computing process of Algorithm~\ref{alg1}, it is discovered that the dimension of  solution space for linear equations $\begin{cases}
	\langle v_{k_1}|\psi\rangle=0 \\
	\langle v_{k_2}|\psi\rangle=0 \\
	\langle v_{k_3}|\psi\rangle=0 \\
\end{cases}$ is at most 1. Thus lead to the proof of Proposition~\ref{prop8}.

\begin{proposition}
	\label{prop8}
	None of mixed state is logically contextual on the Yu-Oh set.
\end{proposition}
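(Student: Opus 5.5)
The plan is to combine Proposition~\ref{prop7} with the observation, drawn from Algorithm~\ref{alg1}, that every relevant solution space $\Psi_k$ has dimension at most one. Here ``mixed'' means a density operator $\rho$ of rank at least two. Equivalently, $\rho$ cannot be written as a single $|\psi\rangle\langle\psi|$.

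Suppose, for contradiction, that such a $\rho$ is logically contextual on $\mathcal{E}_{Yu-Oh}$. By Proposition~\ref{prop3}, the starting vector of Proposition~\ref{prop2} can be taken to be some $v_k\in V_3=\{v_A,v_B,v_C,v_D\}$. By Proposition~\ref{prop7}, we can then choose $v_{k_i}\in s_\lambda^i$ for $s_\lambda^i\in S_\Lambda(v_k)$, $i=1,2,3$, such that $\rho(v_{k_i})=0$ for each $i$. We write $\rho=\sum_j p_j|\psi_j\rangle\langle\psi_j|$ with all $p_j>0$, for instance via the spectral decomposition. Proposition~\ref{prop5} then gives $\langle v_{k_i}|\psi_j\rangle=0$ for every $j$ and every $i$. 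So every $|\psi_j\rangle$ lies in the solution space $\Psi_k$ of the three linear equations in Notation~\ref{not5}.

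The key step is to show that $\dim\Psi_k\le 1$ for every choice of $(v_{k_1},v_{k_2},v_{k_3})$ that admits some vector $\psi$ with $\langle v_k|\psi\rangle\neq0$. I would verify this directly from the list (\ref{globalevents}). The three sets in each $S_\Lambda(v_k)$ share only $v_k$, so the chosen $v_{k_i}$ are not $v_k$ itself. If all three $v_{k_i}$ coincide, then the three sets would share a vector other than $v_k$, which (\ref{globalevents}) rules out. So at least two of the $v_{k_i}$ are distinct. Since the vectors of $\mathcal{E}_{Yu-Oh}$ are pairwise non-parallel, two distinct $v_{k_i}$ already span a 2-dimensional subspace of $\mathbb{C}^3$. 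Its orthogonal complement has dimension at most $1$. This is the computation Algorithm~\ref{alg1} effectively performs, and I would cite it, or a short table, for the finitely many ($3^3$ per $v_k$, reduced by the rotational symmetry of Figure~\ref{fig1}) triples.

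Given $\dim\Psi_k\le1$, all the $|\psi_j\rangle$ are scalar multiples of a single vector $\psi$. Hence $\rho=c\,|\psi\rangle\langle\psi|$, and $tr\rho=1$ forces $\rho$ to be the pure state $|\psi\rangle\langle\psi|$, which has rank one. This contradicts $\rho$ being mixed.

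The main obstacle is making the dimension bound rigorous rather than merely reporting the output of Algorithm~\ref{alg1}. The argument above reduces it to two checks. The first is that no vector other than $v_k$ lies in all three sets of $S_\Lambda(v_k)$, which is immediate from (\ref{globalevents}). The second is that distinct Yu-Oh vectors are non-parallel, which is clear from Figure~\ref{fig1}. These two facts make the computer enumeration unnecessary.
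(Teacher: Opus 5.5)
Your proposal is correct and is essentially the paper's proof. The paper likewise argues from (\ref{globalevents}) that no vector other than $v_k$ lies in all three sets of $S_\Lambda(v_k)$, so the multiset $\{v_{k_i}\}$ contains two distinct, hence linearly independent, vectors; this gives a solution space of dimension at most one and forces $\rho$ to be pure. You make explicit the steps the paper leaves implicit (restricting $v_k$ to $V_3$ via Proposition~\ref{prop3}, using Proposition~\ref{prop5} to place every $|\psi_j\rangle$ in $\Psi_k$, and the pairwise non-parallelism of the Yu-Oh vectors). The only slip is that $v_{k_i}\neq v_k$ follows from $\rho(v_k)>0=\rho(v_{k_i})$, not from the intersection property of the three sets.
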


\begin{proof}
	As shown in (\ref{globalevents}), for all vector $v_k\in V_3$ and $S_{\Lambda}(v_k)$, there is not any vector $v_k' \in \mathcal{E}_{Yu-Oh}$, $v_k' \neq v_k$, such that for all $s_{\lambda}^i\in S_{\Lambda}(v_k)$, $v_k'\in s_{\lambda}^i$. Thus multiset $\{v_{k_i}\}_{i=1}^3$ can not have three repeated elements. So linear equations $\begin{cases}
	\langle v_{k_1}|\psi\rangle=0 \\
	\langle v_{k_2}|\psi\rangle=0 \\
	\langle v_{k_3}|\psi\rangle=0 \\
\end{cases}$ contains at least two equations that is linearly independent. The rank of coefficient matrix for the equations is at least 2. Thus the dimension of its solution space is at most 1, which indicates that quantum state $\rho$ can only be a pure state. 
\end{proof}

Together with Theorem~\ref{the2} and Proposition~\ref{prop8}, we get the theorem that gives all quantum states with logical contextuality on Yu-Oh set as follows.

\begin{theorem}
	All logically contextual quantum states on Yu-Oh set are pure states, which are, omitted normalization:
	\begin{equation*}
		\{(1,1,1)^T,(1,-1,1)^T,(1,1,-1)^T,(-1,1,1)^T\}
	\end{equation*}
\end{theorem}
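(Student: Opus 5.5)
The theorem is obtained by splicing Theorem~\ref{the2} and Proposition~\ref{prop8}. The only delicate point is making sure every state, pure or mixed, is routed into one of them.

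\textbf{Reduction to $V_3$.} Let $\rho$ be logically contextual on $\mathcal{E}_{Yu-Oh}$. By Proposition~\ref{prop2} there is a witnessing event $v$ with $\overline{\rho}(v)=1$ such that every $\lambda\in\Lambda(\mathcal{E}_{Yu-Oh})$ with $\lambda(v)=1$ contains some $v'$ with $\overline{\rho}(v')=0$. Proposition~\ref{prop3} excludes $v\in V_1\cup V_2$, so $v=v_k$ with $k\in\{A,B,C,D\}$. Then (\ref{globalevents}) gives exactly three global events through $v_k$, and $\rho$ must annihilate some $v_{k_1},v_{k_2},v_{k_3}$, one chosen from each of them.

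\textbf{Mixed states.} Write $\rho=\sum_j p_j|\psi_j\rangle\langle\psi_j|$. By Proposition~\ref{prop5}, $\rho(v_{k_i})=0$ forces $\langle v_{k_i}|\psi_j\rangle=0$ for every $i$ and every $j$. Hence all $\psi_j$ lie in the common solution space of the three linear equations.

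\begin{itemize}
\item The argument of Proposition~\ref{prop8} bounds this space. No vector other than $v_k$ lies in all three sets of $S_\Lambda(v_k)$, so at least two of the $v_{k_i}$ are distinct. Any two distinct vectors of the Yu-Oh set are non-parallel, so the coefficient matrix has rank at least $2$ and the solution space has dimension at most $1$.
\item Therefore all $\psi_j$ are proportional and $\rho$ is pure.
\end{itemize}

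\textbf{Pure states.} The pure case is exactly the enumeration of Algorithm~\ref{alg1}, which is Theorem~\ref{the2}. It yields the four listed vectors, each of which is nonzero and satisfies $\langle v_k|\psi\rangle\ne 0$.

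\textbf{Converse.} One still has to check that each listed state really is logically contextual. Take $(1,1,1)^T$ and $k=D$; it is orthogonal to $v_4$, $v_5$ and $v_6$.
\begin{itemize}
\item Each of the three global events in $S_\Lambda(v_D)$ must contain one of $v_4,v_5,v_6$. This fails as written: $\{v_1,v_8,v_9,v_D\}$ contains none of them.
\item So, as the paper's algorithm implies, the witnessing $v_k$ for $(1,1,1)^T$ is not $v_D$ but another element of $V_3$. For example, $S_\Lambda(v_A)$ sets are $\{1,5,6,A\}$, $\{2,6,7,A\}$ and $\{3,5,7,A\}$. With zeros at $v_5$ and $v_6$, the second set still needs $v_6$, which works. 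Every set is then hit, and $\langle v_A|(1,1,1)\rangle=1\ne0$.
\end{itemize}
Verifying the other three vectors reduces to this case by the symmetry of Fig.~\ref{fig1}. By Proposition~\ref{prop2}, each is then logically contextual.

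\textbf{Main obstacle.} The real work is the finite enumeration inside Theorem~\ref{the2}, i.e., running through the $3^3$ choices of $(v_{k_1},v_{k_2},v_{k_3})$ for each $k$. I would rely on the stated result rather than redo it. The only new care needed is in the mixed-state step: confirm that no two distinct Yu-Oh vectors are parallel, so the rank really is at least $2$.
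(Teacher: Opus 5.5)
Your proposal is correct and follows the paper's route: the paper obtains this theorem by combining Theorem~\ref{the2}, which reduces to a witness in $V_3$ via Propositions~\ref{prop2} and \ref{prop3} and then enumerates with Algorithm~\ref{alg1}, with the rank argument of Proposition~\ref{prop8}, which forces every component of a contextual mixture into a common solution space of dimension at most $1$. Your explicit converse check (using $v_A$ rather than $v_D$ as the witness for $(1,1,1)^T$, then symmetry) is a harmless addition, since Algorithm~\ref{alg1} only outputs states that already satisfy the criterion of Proposition~\ref{prop2}.
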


\section{Hardy's paradoxes for Yu-Oh set}\label{sec4}

The concept of Hardy's paradox originates from Hardy's argument. It is now generally referred to as a set of possibilistic conditions leading to contextuality in an inequality-free approach. 

Hardy's paradox, in the bipartite Bell-scenario, is based on 4 possibilistic conditions as follows:

\begin{align*}
	\rho(1,1|a,b)>0 \quad & \rho(1,1|a,b')=0 \\
	\rho(1,1|a',b)=0 \quad & \rho(0,0|a',b')=0
\end{align*}

According to these conditions, notice that $a=1 \implies b'=0$ and $b=1 \implies a'=0$, from which we have $\rho(1,1|a,b)>0 \implies \rho(0,0|a',b')>0$, leading to contradiction. We can also present the conclusion in Section~\ref{sec3} as similar paradoxes.

Lets take $|\psi\rangle=(1,-1,1)^T$ being logically contextual on Yu-Oh set as an example. It is easily verified that vectors $v_5$ and $v_7$ satisfy possibilistic conditions $\rho(v_5)=\rho(v_7)=0$ and $\rho(v_A)>0$, where $\rho=|\psi\rangle\langle\psi|$. It can be proved that these conditions lead to contradiction, thus generated as Proposition~\ref{prop10} as follows.

\begin{proposition}
	\label{prop10}
	Quantum state $\rho$ being logically contextual on $\mathcal{E}_{Yu-Oh}$ implies Hardy's paradoxes as follows.
	\begin{align*}
		\rho(v_k)>0 \quad \rho(v_{k_i})=0
	\end{align*}
	, where $v_k\in V_3$, $v_{k_i} \in s_{\lambda}^i$ and $s_{\lambda}^i\in S_{\Lambda}(v_k), i\in \{1,2,3\}$.
\end{proposition}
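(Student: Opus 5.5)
The statement has two parts. First, every logically contextual $\rho$ yields such a family of possibilistic conditions. Second, these conditions really form a paradox: no noncontextual (KS) assignment is compatible with them. I will prove the two parts in that order.

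\textbf{Existence of the conditions.} Let $\rho$ be logically contextual on $\mathcal{E}_{Yu-Oh}$. By Proposition~\ref{prop2} there is a $v$ with $\overline{\rho}(v)=1$ such that every $\lambda\in\Lambda(\mathcal{E}_{Yu-Oh})$ with $\lambda(v)=1$ also assigns $1$ to some $v'\ne v$ with $\overline{\rho}(v')=0$. By Proposition~\ref{prop3}, $v$ cannot lie in $V_1\cup V_2$, so $v=v_k\in V_3$.

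By (\ref{globalevents}), $S_\Lambda(v_k)=\{s^1_\lambda,s^2_\lambda,s^3_\lambda\}$. Applying Proposition~\ref{prop2} to each of these three global events gives a $v_{k_i}\in s^i_\lambda$, $v_{k_i}\neq v_k$, with $\rho(v_{k_i})=0$. Together with $\rho(v_k)>0$, these are exactly the conditions in the statement. By Theorem~\ref{the2} and Proposition~\ref{prop8} they can be made explicit. For example, $|\psi\rangle=(1,-1,1)^T$ gives $k=A$ with $v_5,v_7$ orthogonal to $\psi$. I will list the twelve choices by checking orthogonality of each of the four states against the vectors in (\ref{globalevents}); this is routine.

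\textbf{The conditions are paradoxical.} Suppose a noncontextual deterministic model reproduced these possibilistic conditions. Since $\rho(v_k)>0$, some possible global event $\lambda$ has $\lambda(v_k)=1$, so the corresponding set $s_\lambda$ lies in $S_\Lambda(v_k)$, i.e.\ $s_\lambda=s^i_\lambda$ for some $i$. Then $\lambda(v_{k_i})=1$ while $\rho(v_{k_i})=0$, which contradicts the requirement that a possible global event has no impossible marginal.

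To mirror the original Hardy chain, I will phrase this as an implication, taking $k=A$ as the example. First, $v_A$ is orthogonal to $v_4,v_8,v_9$. Next, in the triangles $\{v_1,v_2,v_3\}$, $\{v_1,v_5,v_8\}$, $\{v_2,v_6,v_9\}$ and $\{v_3,v_4,v_7\}$, the zero conditions force assignments. Following these, $v_A=1$ forces a $1$ on one of $v_1,v_2,v_3$, and each branch forces $v_5$ or $v_7$ to be $1$, contradicting $\rho(v_5)=\rho(v_7)=0$. The general case follows by the rotational symmetry of Figure~\ref{fig1}.

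\textbf{Main obstacle.} The step most at risk is the completeness claim that $S_\Lambda(v_k)$ has exactly the three listed elements. The whole argument depends on it. I will rely on the exhaustive enumeration in Table A2, and cross-check it by hand: once $v_k=1$, its three orthogonal neighbours among $V_2$ are set to $0$, and this forces exactly one $1$ in $V_1$ and then the remaining pattern.
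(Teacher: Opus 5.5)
Your proof is correct and follows the paper's argument. You extract $v_k$ and the $v_{k_i}$ from Proposition~\ref{prop2}, and you make explicit, via Proposition~\ref{prop3}, that $v_k\in V_3$, which the paper leaves implicit; then you note that any possible global event through $v_k$ is one of the three $s^i_\lambda$ and therefore contains an impossible $v_{k_i}$.

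There is one slip in your optional Hardy-chain illustration. The orthogonal bases other than $\{v_1,v_2,v_3\}$ are $\{v_1,v_4,v_7\}$, $\{v_2,v_5,v_8\}$ and $\{v_3,v_6,v_9\}$, not $\{v_1,v_5,v_8\}$, $\{v_2,v_6,v_9\}$ and $\{v_3,v_4,v_7\}$; for instance, $\langle v_1|v_5\rangle=1\neq 0$. With the correct bases, the branches $v_1$, $v_2$, $v_3$ give exactly $\{v_1,v_5,v_6,v_A\}$, $\{v_2,v_6,v_7,v_A\}$ and $\{v_3,v_5,v_7,v_A\}$. Your conclusion that each branch forces $v_5$ or $v_7$ therefore stands, but your hand cross-check against Table A2 must use these bases, or it will not reproduce the table.
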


\begin{proof}
	Suppose $\rho$ is logically contextual on $\mathcal{E}_{Yu-Oh}$. Then there is $v_k\in\{v_A,v_B,v_C,v_D\}$ satisfying $\rho(v_k)>0$ and for all $s_{\lambda}^i \in S_{\Lambda}(v_k)$, $i \in \{1,2,3\}$, there is $v_{k_i} \in s_{\lambda}^i$, $v_{k_i} \neq v_k$ such that $\rho(v_{k_i})=0$. Thus a set of possibilistic conditions can be generated as follows:
	\begin{align*}
		\rho(v_k)>0 \quad \rho(v_{k_i})=0, i\in \{1,2,3\}.
	\end{align*}
	
	Now we prove the above conditions lead to contradiction.
	
	Under classical probability theory, there exists a global possibilistic distribution $p_{\Lambda}$ for global-event-corresponding sets $S_{\Lambda}(\mathcal{E})$, and it is generated by $\{\overline{\rho}(v)|v\in\mathcal{E}_{Yu-Oh}\}$. Then, for vector $v_k$ in context $C_k\in\mathcal{M}(\mathcal{E}_{Yu-Oh})$:
	\begin{equation*}
		\overline{\rho}(v_k)=p_{\Lambda}|_{C_k}(v_k)=\bigvee\limits_{s_{\lambda_m} \in S_{\Lambda}(v_k)}p_{\Lambda}(s_{\lambda_m}).
	\end{equation*}
	
	Yet we also have: for all $i \in \{1,2,3\}:$
	\begin{equation*}
		\overline{\rho}(v_{k_i})=\bigvee\limits_{s_{\lambda_n} \in S_{\Lambda}(v_{k_i})}p_{\Lambda}(s_{\lambda_n})=\lceil \rho(v_{k_i})\rceil=0,
	\end{equation*}
	which implies that for all $s_{\lambda_n} \in S_{\Lambda}(v_{k_i})$, $p_{\Lambda}(s_{\lambda_n})=0$.
	Due to the fact that for all $m$, $s_{\lambda_m}$ is also in $S_{\Lambda}(v_{k_i})$, so $p_{\Lambda}(s_{\lambda_m})=0$. Thus $\overline{\rho}(v_k)=\bigvee\limits_{s_{\lambda_m} \in S_{\Lambda}(v_k)}p_{\Lambda}(s_{\lambda_m})=0$, which leads to contradiction. In a nutshell, conditions $\rho(v_k)>0 \quad \rho(v_{k_i})=0, i\in \{1,2,3\}$ imply a Hardy's paradox.
\end{proof}

Specifically, for each of the four quantum states listed in Section~\ref{sec3}, which are logically contextual on Yu-Oh set, we can construct the following 12 Hardy's paradoxes under each state after calculation.

\setcounter{equation}{0}
\begin{align}
	\label{hp}
		(1,1,1)^T:\quad &\rho(v_A)>0, \rho(v_5)=\rho(v_6)=0 \\
		(1,1,1)^T:\quad &\rho(v_B)>0, \rho(v_4)=\rho(v_6)=0 \\
		(1,1,1)^T:\quad &\rho(v_C)>0, \rho(v_4)=\rho(v_5)=0 \\
		(-1,1,1)^T:\quad &\rho(v_B)>0, \rho(v_4)=\rho(v_8)=0 \\
		(-1,1,1)^T:\quad &\rho(v_C)>0, \rho(v_4)=\rho(v_9)=0 \\
		(-1,1,1)^T:\quad &\rho(v_D)>0, \rho(v_8)=\rho(v_9)=0 \\
		(1,-1,1)^T:\quad &\rho(v_A)>0, \rho(v_5)=\rho(v_7)=0 \\
		(1,-1,1)^T:\quad &\rho(v_C)>0, \rho(v_5)=\rho(v_9)=0 \\
		(1,-1,1)^T:\quad &\rho(v_D)>0, \rho(v_7)=\rho(v_9)=0 \\
		(1,1,-1)^T:\quad &\rho(v_A)>0, \rho(v_6)=\rho(v_7)=0 \\
		(1,1,-1)^T:\quad &\rho(v_B)>0, \rho(v_6)=\rho(v_8)=0 \\
		(1,1,-1)^T:\quad &\rho(v_D)>0, \rho(v_7)=\rho(v_8)=0
\end{align}

As listed above, we finally construct 12 Hardy's paradoxes for Yu-Oh set with the help of logically contextual quantum states. Thus supporting the Proposition~\ref{prop11}.

\begin{proposition}
	\label{prop11}
	Hardy's paradox is necessary condition of logical contextuality in Yu-Oh set. 
\end{proposition}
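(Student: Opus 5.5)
The plan is to show that every quantum state $\rho$ that is logically contextual on $\mathcal{E}_{Yu-Oh}$ satisfies the possibilistic conditions of at least one of the twelve Hardy's paradoxes listed above. The argument combines the classification of Section~\ref{sec3} with Proposition~\ref{prop10}.

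\textbf{Step 1 (reduce to four pure states).} By Proposition~\ref{prop8} no mixed state is logically contextual. By Theorem~\ref{the2}, the only logically contextual states are the four pure states $(1,1,1)^T$, $(-1,1,1)^T$, $(1,-1,1)^T$, $(1,1,-1)^T$. So it suffices to treat each of these four states.

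\textbf{Step 2 (produce the Hardy conditions abstractly).} For an arbitrary logically contextual $\rho$, I apply Proposition~\ref{prop2}. Proposition~\ref{prop3} rules out witnesses in $V_1\cup V_2$, so the witness $v$ must be some $v_k\in V_3$ with $\rho(v_k)>0$. Moreover, each of the three global events $s_\lambda^i\in S_\Lambda(v_k)$ in (\ref{globalevents}) contains some $v_{k_i}\neq v_k$ with $\rho(v_{k_i})=0$. These are exactly the hypotheses of Proposition~\ref{prop10}, which gives the system $\rho(v_k)>0,\ \rho(v_{k_i})=0$ and shows that no global possibilistic distribution $p_\Lambda$ can reproduce it. That system is, by definition, a Hardy's paradox. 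This already yields necessity.

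\textbf{Step 3 (concrete confirmation against the list).} For each of the four states I compute the inner products with $v_4,\dots,v_9$ and with $v_A,\dots,v_D$. For example, $(1,1,1)^T$ is orthogonal to $v_4,v_5,v_6$ and has nonzero overlap with $v_A,v_B,v_C$.
\begin{itemize}
\item Take $v_k=v_A$. The three sets in $S_\Lambda(v_A)$ are hit by $v_5,v_6,v_5$ respectively.
\item This yields the first listed paradox, and the other cases are handled the same way.
\end{itemize}
The condition list collapses from three equations to two because a single vector, such as $v_5$, can cover two of the three global events.

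\textbf{Main obstacle.} The substantive difficulty is making Step 2 airtight. Proposition~\ref{prop3} only excludes starting points in $V_1\cup V_2$ under the paper's convention that contextuality is witnessed by vectors of the set. I would make explicit that, in Proposition~\ref{prop2}, the case $v\in V_1\cup V_2$ is impossible, because Proposition~\ref{prop3} supplies a global event supported entirely on possible events. Once that is stated, the rest is routine checking of the covering condition.
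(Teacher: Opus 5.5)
Your proposal is correct and follows essentially the paper's route: Propositions~\ref{prop2} and \ref{prop3} force the witness into $V_3$. Proposition~\ref{prop10} then turns the covering condition on $S_\Lambda(v_k)$ into the Hardy conditions $\rho(v_k)>0$, $\rho(v_{k_i})=0$, while the classification (Theorem~\ref{the2}, Proposition~\ref{prop8}) and the explicit list of twelve paradoxes serve as concrete confirmation, and, as you note, your Step~2 alone already covers arbitrary $\rho$.
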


Success probability $SP$ is a vital evaluation parameter for Hardy's paradox. It characterizes the probability of witnessing the corresponding possibilistic conditions for a certain Hardy's paradox in an experiment. Its generalized notion, degree of success $DS$, was proposed in \cite{Rai2021} to cater for more generalized yet inequality-dependent Hardy-type paradoxes proposed by Cabello \cite{Cabello2002}, Liang and Li \cite{Liang2003,Liang2005}. Our constructions of Hardy's paradoxes for Yu-Oh set give inequality-free proofs for contextuality, so we can give $SP$s of Hardy's paradoxes $(1-12)$. Denote $SP$ of Hardy's paradox $(i)$ as $SP_i$, $i\in\{1,2,...,12\}$, then we have the following results.

\begin{equation*}
	SP_i=
	\begin{cases}
		\rho(v_A)=11.1\%,&i=1,7,10 \\
		\rho(v_B)=11.1\%,&i=2,4,11 \\
		\rho(v_C)=11.1\%,&i=3,5,8 \\
		\rho(v_D)=11.1\%,&i=6,9,12 \\
	\end{cases}
\end{equation*}

\section{Supporting observables}\label{sec5}

Based on the given Hardy's paradoxes in Section~\ref{sec4}, we can now propose their corresponding observables to witness each of them in experiments. As shown in Section~\ref{sec4}, it is noticed that in each Hardy's paradoxes, the possiblistic conditions are consisted of one vector with probability greater than 0 and two equal to 0. Such paradoxes can be witnessed if we can find observables indicating such conditions.

\begin{proposition}
	\label{prop12}
	Given the scenario $\mathcal{E}_{Yu-Oh}$ and quantum state $\rho=|\psi\rangle\langle\psi|$, $|\psi\rangle$ being real, only 1 observable is needed to witness possiblistic conditions as below.
	\begin{equation*}
		\rho(v_{k_1})>0, \quad \rho(v_{k_2})=\rho(v_{k_3})=0,
	\end{equation*}
	where $v_{k_1},v_{k_2},v_{k_3}\in \mathcal{E}_{Yu-Oh}$ are linearly independent vectors and are \textit{normalized}\footnote{It is not hard to verify the vectors in each Hardy's paradoxes constructed in Section~\ref{sec4} are linearly independent.}.
	
	The observable has unique spectral decomposition as $\hat{A}=\sum\limits_{i=1}^3a_i\frac{|v_{k_i}'\rangle\langle v_{k_i}'|}{\|v_{k_i}'\|^2}$, where $a_1$, $a_2$, $a_3$ are real numbers and $v_{k_1}'$, $v_{k_2}'$, $v_{k_3}'$ are $v_{k_1}$, $v_{k_2}$, $v_{k_3}$ after Gram-Schmidt orthogonalizition process.
\end{proposition}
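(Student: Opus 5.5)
The plan is to apply Gram–Schmidt to the three given vectors in a particular order and then read the three possibilistic conditions off the outcome probabilities of the resulting single observable. I will order the vectors as $v_{k_2}, v_{k_3}, v_{k_1}$, putting the two null events first. That gives an orthonormal basis $\{v_{k_2}', v_{k_3}', v_{k_1}'\}$ with
\[
\mathrm{span}\{v_{k_2}'\}=\mathrm{span}\{v_{k_2}\}, \quad \mathrm{span}\{v_{k_2}',v_{k_3}'\}=\mathrm{span}\{v_{k_2},v_{k_3}\}.
\]
Linear independence guarantees that the process never produces a zero vector, so we really obtain a basis of $\mathbb{C}^3$ (or $\mathbb{R}^3$). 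For any choice of distinct real $a_1,a_2,a_3$, the operator $\hat{A}=\sum_i a_i\frac{|v_{k_i}'\rangle\langle v_{k_i}'|}{\|v_{k_i}'\|^2}$ is self-adjoint. Its spectral decomposition is unique, and its eigenprojectors are exactly these three rank-1 projectors.

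The central step is to show that the outcome statistics of this one observable on $\rho=|\psi\rangle\langle\psi|$ certify all three conditions.

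\textbf{The null conditions.} The condition $\rho(v_{k_2})=\rho(v_{k_3})=0$ means $\langle v_{k_2}|\psi\rangle=\langle v_{k_3}|\psi\rangle=0$, that is, $\psi\perp\mathrm{span}\{v_{k_2},v_{k_3}\}=\mathrm{span}\{v_{k_2}',v_{k_3}'\}$. This is equivalent to $\rho(A=a_2)=\rho(A=a_3)=0$, so these two outcomes witness the zero conditions directly.

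\textbf{The positive condition.} Since $\{v_{k_i}'\}$ is an orthonormal basis and $\psi$ has no component along $v_{k_2}'$ or $v_{k_3}'$, we get $\psi\parallel v_{k_1}'$, and hence $\rho(A=a_1)=1$. Expand $v_{k_1}=\alpha v_{k_1}'+\beta v_{k_2}'+\gamma v_{k_3}'$; then
\[
\langle v_{k_1}|\psi\rangle=\bar\alpha\langle v_{k_1}'|\psi\rangle .
\]
Here $\alpha=\langle v_{k_1}'|v_{k_1}\rangle\neq0$, because $v_{k_1}\notin\mathrm{span}\{v_{k_2},v_{k_3}\}$ by linear independence. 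It follows that $\rho(v_{k_1})=|\alpha|^2\rho(A=a_1)$. So $\rho(v_{k_1})>0$ holds exactly when $\rho(A=a_1)>0$, and in the present situation $\rho(v_{k_1})=|\alpha|^2$.

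Putting these together, one measurement of $\hat{A}$ with outcome statistics $(p(a_1),p(a_2),p(a_3))=(>0,0,0)$ is equivalent to the Hardy conditions $\rho(v_{k_1})>0,\ \rho(v_{k_2})=\rho(v_{k_3})=0$. The hypothesis that $|\psi\rangle$ is real is used only so that Gram–Schmidt can be carried out over $\mathbb{R}$ and $\hat A$ is a real symmetric matrix. The argument itself does not depend on this.

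\textbf{Main obstacle.} I expect the delicate point to be the direction of the equivalence for the positive condition. Without the zero conditions, $\rho(A=a_1)$ does not equal $\rho(v_{k_1})$, so the observable witnesses $\rho(v_{k_1})>0$ only in conjunction with the other two outcomes being null. The ordering of the Gram–Schmidt process matters here: I must stress that $v_{k_2}$ and $v_{k_3}$ are orthogonalized first, and I read the statement's labeling in that sense. Finally, I will remark that for each of the twelve paradoxes (1)–(12) the three vectors are linearly independent, which is easily checked from Figure~\ref{fig1}.
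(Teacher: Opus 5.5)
Your proposal is correct and follows essentially the paper's route. Both run Gram--Schmidt with the two null vectors first (the paper's order is $v_{k_3}$, then $v_{k_2}$, then $v_{k_1}$), so the two corresponding outcomes of $\hat A$ vanish exactly when $\rho(v_{k_2})=\rho(v_{k_3})=0$, the remaining outcome is certain, and linear independence forces $\rho(v_{k_1})>0$. Your span-based argument with the factor $|\alpha|^2$ is a cleaner form of the paper's explicit computation with $l_1$, and it confirms that the realness of $|\psi\rangle$ is not needed; the paper uses realness only when squaring $\langle\psi|v_{k_i}\rangle$ in the converse direction.
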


\begin{proof}
	Given possiblistic conditions as $\rho(v_{k_1})>0, \quad \rho(v_{k_2})=\rho(v_{k_3})=0$, we have($\{v_{k_i}\}_{i=1}^3$ are normalized):
	\begin{align*}
		&\rho(v_{k_1})=\langle\psi|v_{k_1}\rangle\langle v_{k_1}|\psi\rangle=\|\langle\psi|v_{k_1}\rangle\|^2>0 \\
		&\rho(v_{k_2})=\|\langle\psi|v_{k_2}\rangle\|^2=0 \\
		&\rho(v_{k_3})=\|\langle\psi|v_{k_3}\rangle\|^2=0
	\end{align*}
	Thus we have $\langle\psi|v_{k_2}\rangle=\langle\psi|v_{k_3}\rangle=0$.
	
	We firstly perform Gram-Schmidt orthogonalizition process on $v_{k_1}$, $v_{k_2}$ and $v_{k_3}$ as follows.
	\begin{align*}
		&v_{k_1}'=v_{k_3} \\
		&v_{k_2}'=v_{k_2}-\frac{(v_{k_2},v_{k_1}')}{||v_{k_1}'||^2}v_{k_1}' \triangleq v_{k_2}+l_1v_{k_1}' \\
		&v_{k_3}'=v_{k_1}-\frac{(v_{k_2},v_{k_1}')}{||v_{k_1}'||^2}v_{k_1}'-\frac{(v_{k_3},v_{k_2}')}{||v_{k_2}'||^2}v_{k_2}' \triangleq v_{k_1}+l_1v_{k_1}'+l_2v_{k_2}'
	\end{align*}
	$l_1,l_2$ are real because $v_{k_2},v_{k_3}$ are real.

	We can construct a set of mutually orthogonal projectors $P_i=\frac{|v_i'\rangle\langle v_i'|}{\|v_i'\|^2},i=1,2,3$. Let observable $\hat{A}$ has spectral decomposition as  $\hat{A}=\sum\limits_{i=1}^3a_iP_i$, where $a_1,a_2,a_3$ are real numbers, and we will have:
	\begin{align*}
		&\rho(\hat{A}=a_1)=\rho(P_1)=\rho(v_{k_3})=0 \\
		&\begin{aligned}
			\rho(\hat{A}=a_2)&=\rho(P_2)=\frac{1}{\|v_{k_2}\|^2}\langle\psi|(|v_{k_2}\rangle+l_1|v_{k_3}\rangle)(\langle v_{k_2}|+l_1\langle v_{k_3}|)|\psi\rangle \\
			&=\frac{1}{\|v_{k_2}\|^2}(\langle\psi|v_{k_2}\rangle\langle v_{k_2}|\psi\rangle+l_1\langle\psi|v_{k_2}\rangle\langle v_{k_3}|\psi\rangle+l_1\langle\psi|v_{k_3}\rangle\langle v_{k_2}|\psi\rangle+\langle\psi|v_{k_3}\rangle\langle v_{k_3}|\psi\rangle)=0
		\end{aligned} \\
		&\rho(\hat{A}=a_3)=1-\rho(\hat{A}=a_1)-\rho(\hat{A}=a_2)=1
	\end{align*}
	$\hat{A}=a_1,\hat{A}=a_2,\hat{A}=a_3$ are collectively exhaustive events in $3$-dimensional Hilbert space.

	On the other hand, assume $\rho(\hat{A}=a_1)=\rho(\hat{A}=a_2)=0, \rho(\hat{A}=a_3)=1$, we have:
	\begin{align*}
		&\rho(v_{k_3})=\rho(\hat{A}=a_1)=(\langle\psi|v_{k_3}\rangle)^2=0 \\
		&\begin{aligned}
			\rho(\hat{A}=a_2)
			&=\frac{1}{\|v_{k_2}\|^2}(\langle\psi|v_{k_2}\rangle\langle v_{k_2}|\psi\rangle+l_1\langle\psi|v_{k_2}\rangle\langle v_{k_3}|\psi\rangle+l_1\langle\psi|v_{k_3}\rangle\langle v_{k_2}|\psi\rangle+\langle\psi|v_{k_3}\rangle\langle v_{k_3}|\psi\rangle) \\
			&=\frac{1}{\|v_{k_2}\|^2}[(\langle\psi|v_{k_2}\rangle+l_1\langle\psi|v_{k_3}\rangle)^2+(1-l_1^2)(\langle\psi|v_{k_3}\rangle)^2] \\
			&=\frac{1}{\|v_{k_2}\|^2}(\langle\psi|v_{k_2}\rangle+l_1\langle\psi|v_{k_3}\rangle)^2=0
		\end{aligned}
	\end{align*}
	which implies that $\langle\psi|v_{k_3}\rangle=\langle\psi|v_{k_2}\rangle=0$. Thus $\rho(v_{k_3})=\rho(v_{k_2})=0$.
	
	Since $v_{k_1},v_{k_2},v_{k_3}$ are linearly independent in 3-dimensional Hilbert space, there is no $|\psi\rangle\neq 0$ such that $\langle\psi|v_i\rangle=0$ for all $i\in \{1,2,3\}$, which implies that $\rho(v_{k_1})=(\langle\psi|v_{k_1}\rangle)^2>0$.
	
	 As the calculation shows, the inevitability of event $\hat{A}=a_3$ witnesses possiblistic conditions as $\rho(v_{k_1})>0, \quad \rho(v_{k_2})=\rho(v_{k_3})=0$.
\end{proof}

With Proposition~\ref{prop12} we can construct observables to witness Hardy's paradoxes $(1-12)$. Take Hardy's paradox $(1)$ for an example.

Given scenario $\mathcal{E}_{Yu-Oh}$ and quantum state $|\psi\rangle=\frac{1}{\sqrt{3}}(1,1,1)^T$, possiblistic conditions $\rho(v_A)>0,\rho(v_5)=\rho(v_6)=0$ construct a Hardy's paradox. According to Proposition~\ref{prop12}, we can construct projectors as follows.

\[
P_1=\frac{1}{2}\begin{pmatrix}
	1 & 0 & -1 \\
	0 & 0 & 0 \\
	-1 & 0 & 1
\end{pmatrix},\quad
P_2=\frac{1}{6}\begin{pmatrix}
	1 & -2 & 1 \\
	-2 & 4 & -2 \\
	1 & -2 & 1
\end{pmatrix},\quad
P_3=\frac{1}{3}\begin{pmatrix}
	1 & 1 & 1 \\
	1 & 1 & 1 \\
	1 & 1 & 1
\end{pmatrix}
\]

Let observable $\hat{A}=a_1P_1+a_2P_2+a_3P_3$, where $a_1,a_2,a_3$ are real numbers. Then the inevitability of event $\hat{A}=a_3$ witnesses Hardy-type paradox (1). Here we provide projectors $P_1,P_2,P_3$ to experimentally witness Hardy's paradoxes $(1-12)$ in Table~\ref{table1}.

\begin{longtable}{c|c|ccc}
	\label{table1}
    Paradox & State $|\psi\rangle$ & $P_1$ & $P_2$ & $P_3$ \\ [2pt]
    \hline
$(1)$ & $\frac{1}{\sqrt{3}}(1,1,1)^T$ & 
$\frac{1}{2}\begin{pmatrix}
	1 & 0 & -1 \\
	0 & 0 & 0 \\
	-1 & 0 & 1
\end{pmatrix}$ & 
$\frac{1}{6}\begin{pmatrix}
	1 & -2 & 1 \\
	-2 & 4 & -2 \\
	1 & -2 & 1
\end{pmatrix}$ & 
$\frac{1}{3}\begin{pmatrix}
	1 & 1 & 1 \\
	1 & 1 & 1 \\
	1 & 1 & 1
\end{pmatrix}$ \\ [1pt]
$(2)$ & $\frac{1}{\sqrt{3}}(1,1,1)^T$ & 
$\frac{1}{2}\begin{pmatrix}
	0 & 0 & 0 \\
	0 & 1 & -1 \\
	0 & -1 & 1
\end{pmatrix}$ & 
$\frac{1}{6}\begin{pmatrix}
	4 & -2 & -2 \\
	-2 & 1 & 1 \\
	-2 & 1 & 1
\end{pmatrix}$ & 
$\frac{1}{3}\begin{pmatrix}
	1 & 1 & 1 \\
	1 & 1 & 1 \\
	1 & 1 & 1
\end{pmatrix}$ \\ [1pt]
$(3)$ & $\frac{1}{\sqrt{3}}(1,1,1)^T$ & 
$\frac{1}{2}\begin{pmatrix}
	0 & 0 & 0 \\
	0 & 1 & -1 \\
	0 & -1 & 1
\end{pmatrix}$ & 
$\frac{1}{6}\begin{pmatrix}
	4 & -2 & -2 \\
	-2 & 1 & 1 \\
	-2 & 1 & 1
\end{pmatrix}$ & 
$\frac{1}{3}\begin{pmatrix}
	1 & 1 & 1 \\
	1 & 1 & 1 \\
	1 & 1 & 1
\end{pmatrix}$ \\ [1pt]
$(4)$ & $\frac{1}{\sqrt{3}}(-1,1,1)^T$ & 
$\frac{1}{2}\begin{pmatrix}
	0 & 0 & 0 \\
	0 & 1 & -1\\
	0 & -1 & 1
\end{pmatrix}$ & 
$\frac{1}{6}\begin{pmatrix}
	4 & 2 & 2 \\
	2 & 1 & 1 \\
	2 & 1 & 1
\end{pmatrix}$ & 
$\frac{1}{2}\begin{pmatrix}
	0 & 0 & 0 \\
	0 & 1 & 1 \\
	0 & 1 & 1
\end{pmatrix}$ \\ [1pt]
$(5)$ & $\frac{1}{\sqrt{3}}(-1,1,1)^T$ & 
$\frac{1}{2}\begin{pmatrix}
	0 & 0 & 0 \\
	0 & 1 & -1 \\
	0 & -1 & 1
\end{pmatrix}$ & 
$\frac{1}{6}\begin{pmatrix}
	4 & 2 & 2 \\
	2 & 1 & 1 \\
	2 & 1 & 1
\end{pmatrix}$ & 
$\frac{1}{2}\begin{pmatrix}
	0 & 0 & 0 \\
	0 & 1 & 1 \\
	0 & 1 & 1
\end{pmatrix}$ \\ [1pt]
$(6)$ & $\frac{1}{\sqrt{3}}(-1,1,1)^T$ & 
$\frac{1}{2}\begin{pmatrix}
	1 & 0 & 1 \\
	0 & 0 & 0 \\
	1 & 0 & 1
\end{pmatrix}$ & 
$\frac{1}{6}\begin{pmatrix}
	1 & 2 & -1 \\
	2 & 4 & -2 \\
	-1 & -2 & 1
\end{pmatrix}$ & 
$\frac{1}{3}\begin{pmatrix}
	-1 & -1 & -1 \\
	-1 & 1 & 1 \\
	-1 & 1 & 1
\end{pmatrix}$ \\ [1pt]
$(7)$ & $\frac{1}{\sqrt{3}}(1,-1,1)^T$ & 
$\frac{1}{2}\begin{pmatrix}
	1 & 0 & -1 \\
	0 & 0 & 0 \\
	-1 & 0 & 1
\end{pmatrix}$ & 
$\frac{1}{6}\begin{pmatrix}
	1 & -2 & 1 \\
	-2 & 4 & -2 \\
	1 & -2 & 1
\end{pmatrix}$ & 
$\frac{1}{3}\begin{pmatrix}
	1 & 1 & 1 \\
	1 & 1 & 1 \\
	1 & 1 & 1
\end{pmatrix}$ \\ [1pt]
$(8)$ & $\frac{1}{\sqrt{3}}(1,-1,1)^T$ & 
$\frac{1}{2}\begin{pmatrix}
	1 & 0 & -1 \\
	0 & 0 & 0 \\
	-1 & 0 & 1
\end{pmatrix}$ & 
$\frac{1}{6}\begin{pmatrix}
	1 & 2 & 1 \\
	2 & 4 & 2 \\
	1 & 2 & 1
\end{pmatrix}$ & 
$\frac{1}{3}\begin{pmatrix}
	1 & -1 & 1 \\
	-1 & 1 & -1 \\
	1 & -1 & 1
\end{pmatrix}$ \\ [1pt]
$(9)$ & $\frac{1}{\sqrt{3}}(1,-1,1)^T$ & 
$\frac{1}{2}\begin{pmatrix}
	0 & 0 & 0 \\
	0 & 1 & 1 \\
	0 & 1 & 1
\end{pmatrix}$ & 
$\frac{1}{6}\begin{pmatrix}
	4 & 2 & -2 \\
	2 & 1 & -1 \\
	-2 & -1 & 1
\end{pmatrix}$ & 
$\frac{1}{3}\begin{pmatrix}
	1 & -1 & 1 \\
	-1 & 1 & -1 \\
	1 & -1 & 1
\end{pmatrix}$ \\ [1pt]
$(10)$ & $\frac{1}{\sqrt{3}}(1,1,-1)^T$ & 
$\frac{1}{2}\begin{pmatrix}
	1 & -1 & 0 \\
	-1 & 1& 0 \\
	0 & 0 & 0
\end{pmatrix}$ & 
$\frac{1}{6}\begin{pmatrix}
	1 & 1 & 2 \\
	1 & 1 & 2 \\
	2 & 2 & 4
\end{pmatrix}$ & 
$\frac{1}{3}\begin{pmatrix}
	1 & 1 & -1 \\
	1 & 1 & -1 \\
	-1 & -1 & 1
\end{pmatrix}$ \\ [1pt]
$(11)$ & $\frac{1}{\sqrt{3}}(1,1,-1)^T$ & 
$\frac{1}{2}\begin{pmatrix}
	1 & -1 & 0 \\
	-1 & 1 & 0 \\
	0 & 0 & 0
\end{pmatrix}$ & 
$\frac{1}{6}\begin{pmatrix}
	1 & 1 & 2 \\
	1 & 1 & 2 \\
	2 & 2 & 4
\end{pmatrix}$ & 
$\frac{1}{3}\begin{pmatrix}
	1 & 1 & -1 \\
	1 & 1 & -1 \\
	-1 & -1 & 1
\end{pmatrix}$ \\ [1pt]
$(12)$ & $\frac{1}{\sqrt{3}}(1,1,-1)^T$ & 
$\frac{1}{2}\begin{pmatrix}
	0 & 0 & 0 \\
	0 & 1 & 1 \\
	0 & 1 & 1
\end{pmatrix}$ & 
$\frac{1}{6}\begin{pmatrix}
	4 & -2 & 2 \\
	-2 & 1 & -1 \\
	2 & -1 & 1
\end{pmatrix}$ & 
$\frac{1}{3}\begin{pmatrix}
	1 & 1 & -1 \\
	1 & 1 & -1 \\
	-1 & -1 & 1
\end{pmatrix}$ \\
\caption{Table of projectors to construct observables that witnesses Hardy-type paradoxes $(1-12)$.}
\end{longtable} 

\section{Conclusion}\label{sec6}

The primary aim of this research is to construct Hardy's paradox for Yu–Oh set. We first identified all logically contextual states in Yu-Oh quantum system. In order to overcome the difficulty mentioned in \cite{ksgeneral2022} that sheaf-theoretic framework fail to recognize Yu-Oh's argument, we propose that the 13 vectors constituting the Yu–Oh set can be view as quantum events, an idea rooted in quantum logic \cite{birkhoff1936}. Furthermore, we demonstrate that each KS-assignment of projectors in Yu-Oh set induces a global event, so that we can search for logically contextual quantum states utilizing Proposition~\ref{prop2}.

By means of Proposition~\ref{prop3} and Algorithm~\ref{alg1}, we systematically identify all logically contextual pure states, while theoretically proving, via Proposition~\ref{prop5} to \ref{prop8}, that no mixed state can exhibit logical contextuality. Based on the identified logically contextual quantum states, we construct 12 Hardy's paradoxes in the scenario, each one of them with success probability $SP$ of 11.1\%. Finally, for each Hardy's paradox, we construct a single measurement to experimentally witness its possibilistic conditions.

\begin{appendices}
\floatname{algorithm}{Algorithm A}
\setcounter{algorithm}{0}

\section{Algorithms and related tables}\label{secA}

\begin{longtable}{l|l}
	\label{table:appendix1}
	Context $C\in\mathcal{C}(\mathcal{E})$ of 2 vectors & $C^{\perp}$ \\
	\hline
	$v_4=(0,1,-1)^T$, $v_A=(-1,1,1)$ & $(2,1,1)^T$ \\
	$v_8=(1,0,1)^T$, $v_A=(-1,1,1)$ & $(-1,-2,1)^T$ \\
	$v_9=(1,1,0)^T$, $v_A=(-1,1,1)$ & $(1,-1,2)^T$ \\
	$v_5=(1,0,-1)^T$, $v_B=(1,-1,1)$ & $(-1,-2,-1)^T$ \\
	$v_7=(0,1,1)^T$, $v_B=(1,-1,1)$ & $(2,1,-1)^T$ \\
	$v_9=(1,1,0)^T$, $v_B=(1,-1,1)$ & $(1,-1,-2)^T$ \\
	$v_6=(1,-1,0)^T$, $v_C=(1,1,-1)$ & $(1,1,2)^T$ \\
	$v_7=(0,1,1)^T$, $v_C=(1,1,-1)$ & $(-2,1,-1)^T$ \\
	$v_8=(1,0,1)^T$, $v_C=(1,1,-1)$ & $(-1,2,1)^T$ \\
	$v_4=(0,1,-1)^T$, $v_D=(1,1,1)$ & $(2,-1,-1)^T$ \\
	$v_5=(1,0,-1)^T$, $v_D=(1,1,1)$ & $(1,-2,1)^T$ \\
	$v_6=(1,-1,0)^T$, $v_D=(1,1,1)$ & $(-1,-1,2)^T$ \\
	\caption{The orthogonal complete for context $C\in\mathcal{C}(\mathcal{E})$ of 2 vectors.}
\end{longtable}

\begin{algorithm}[H]
	\label{alg:appendix1}
	\caption{Algorithm searching for global events}
	\renewcommand{\algorithmicrequire}{\textbf{Input:}}
	\renewcommand{\algorithmicensure}{\textbf{Output:}}
	
	\begin{algorithmic}[1]
		\Require vector set $\mathcal{E}$
		\Ensure set of global events $S_{\Lambda}(\mathcal{E})$
		
		\State $S_{\Lambda}(\mathcal{E})=[]$		
		\State Generate the exclusivity $G$ graph corresponding to $E$.	
		\State Enumerate every normalized orthogonal basis of $E$ and add them into $\mathcal{M}(\mathcal{E})$.
		\State Make copy of graph $G$ and delete every vertices corresponding vectors in all $C_i\in\mathcal{M}(\mathcal{E})$ in the copy to get subgraph $G'$.
		
		\For{each $(v_1,...,v_k), v_i\in C_i, i\in\{1,2,...,k\}, k=|\mathcal{M}(\mathcal{E})|$}
		\If{$\{v_1,...,v_k\}$ is the independent set of $G$}
		\State Add $\{v_1,...,v_k\}$ into $S_{\Lambda}(\mathcal{E})$.
		\For{each $I$ is the independent set of $G'$}
		\State Add $\{v_1,...,v_k\}\cup I$ into $S_{\Lambda}(\mathcal{E})$.
		\EndFor
		\EndIf
		\EndFor
		
		\Return $S_{\Lambda}(\mathcal{E})$
		
	\end{algorithmic}
\end{algorithm}

\begin{longtable}{c|ccccccccccccc}
	\label{table:appendix2}
	\diagbox{Ge}{Me} & $v_{k_1}$ & $v_{k_2}$ & $v_{k_3}$ & $v_4$ & $v_5$ & $v_6$ & $v_7$ & $v_8$ & $v_9$ & $v_A$ & $v_B$ & $v_C$ & $v_D$ \\
	\hline
	$\lambda_1$ & 1 & 0 & 0 & 0 & 1 & 1 & 0 & 0 & 0 & 0 & 0 & 0 & 0  \\
	$\lambda_2$ & 1 & 0 & 0 & 0 & 1 & 1 & 0 & 0 & 0 & 1 & 0 & 0 & 0  \\
	$\lambda_3$ & 1 & 0 & 0 & 0 & 1 & 0 & 0 & 0 & 1 & 0 & 0 & 0 & 0  \\
	$\lambda_4$ & 1 & 0 & 0 & 0 & 1 & 0 & 0 & 0 & 1 & 0 & 0 & 1 & 0  \\
	$\lambda_5$ & 1 & 0 & 0 & 0 & 0 & 1 & 0 & 1 & 0 & 0 & 0 & 0 & 0  \\
	$\lambda_6$ & 1 & 0 & 0 & 0 & 0 & 1 & 0 & 1 & 0 & 0 & 1 & 0 & 0  \\
	$\lambda_7$ & 1 & 0 & 0 & 0 & 0 & 0 & 0 & 1 & 1 & 0 & 0 & 0 & 0  \\
	$\lambda_8$ & 1 & 0 & 0 & 0 & 0 & 0 & 0 & 1 & 1 & 0 & 0 & 0 & 1  \\
	$\lambda_9$ & 0 & 1& 0 & 1 & 0 & 1 & 0 & 0 & 0 & 0 & 0 & 0 & 0  \\
	$\lambda_{10}$ & 0 & 1 & 0 & 1 & 0 & 1 & 0 & 0 & 0 & 0 & 1 & 0 & 0  \\
	$\lambda_{11}$ & 0 & 1 & 0 & 1 & 0 & 0 & 0 & 0 & 1 & 0 & 0 & 0 & 0  \\
	$\lambda_{12}$ & 0 & 1 & 0 & 1 & 0 & 0 & 0 & 0 & 1 & 0 & 0 & 1 & 0  \\
	$\lambda_{13}$ & 0 & 1 & 0 & 0 & 0 & 1 & 1 & 0 & 0 & 0 & 0 & 0 & 0  \\
	$\lambda_{14}$ & 0 & 1 & 0 & 0 & 0 & 1 & 1 & 0 & 0 & 1 & 0 & 0 & 0  \\
	$\lambda_{15}$ & 0 & 1 & 0 & 0 & 0 & 0 & 1 & 0 & 1 & 0 & 0 & 0 & 0  \\
	$\lambda_{16}$ & 0 & 1 & 0 & 0 & 0 & 0 & 1 & 0 & 1 & 0 & 0 & 0 & 1  \\
	$\lambda_{17}$ & 0 & 0 & 1 & 1 & 1 & 0 & 0 & 0 & 0 & 0 & 0 & 0 & 0  \\
	$\lambda_{18}$ & 0 & 0 & 1 & 1 & 1 & 0 & 0 & 0 & 0 & 0 & 0 & 1 & 0  \\
	$\lambda_{19}$ & 0 & 0 & 1 & 1 & 0 & 0 & 0 & 1 & 0 & 0 & 0 & 0 & 0  \\
	$\lambda_{20}$ & 0 & 0 & 1 & 1 & 0 & 0 & 0 & 1 & 0 & 0 & 1 & 0 & 0  \\
	$\lambda_{21}$ & 0 & 0 & 1 & 0 & 1 & 0 & 1 & 0 & 0 & 0 & 0 & 0 & 0  \\
	$\lambda_{22}$ & 0 & 0 & 1 & 0 & 1 & 0 & 1 & 0 & 0 & 1 & 0 & 0 & 0  \\
	$\lambda_{23}$ & 0 & 0 & 1 & 0 & 0 & 0 & 1 & 1 & 0 & 0 & 0 & 0 & 0  \\
	$\lambda_{24}$ & 0 & 0 & 1 & 0 & 0 & 0 & 1 & 1 & 0 & 0 & 0 & 0 & 1 \\
	
\caption{Global events (KS-assignment) on Yu-Oh set, in which 'Ge' denotes Global event and 'Me' denotes Marginal events.}
\end{longtable}

\end{appendices}


\bibliography{sn-bibliography}

\end{document}